\newtheorem{theorem}{Theorem} %[section]
\newtheorem{lemma}{Lemma}
\newtheorem{proposition}{Proposition}
\newtheorem{definition}{Definition}
\newtheorem{example}{Example}
\newcommand{\conv}{\mathop{\rm conv}}
\newcommand{\supp}{\mathop{\rm supp}}
\newcommand{\argmax}{\mathop{\rm arg\,max}}
\newcommand{\cF}{\mathcal{F}}
\newcommand{\bA}{\bm{A}}
\newcommand{\cA}{\mathcal{A}}
\newcommand{\sA}{\mathscr{A}}
\newcommand{\sd}{\mathrm{sd}}
\newcommand{\ot}{\leftarrow}
\renewcommand{\epsilon}{\varepsilon}
\renewcommand{\mid}{\,:\,}
\title{Random Assignment of Indivisible Goods under Constraints\thanks{A preliminary version appeared in the proceedings of IJCAI2023~\cite{KSY2023}.}}
\author[1]{Yasushi Kawase}
\author[2]{Hanna Sumita}
\author[2]{Yu Yokoi}
\affil[1]{University of Tokyo, Japan}%, \texttt{kawase@mist.i.u-tokyo.ac.jp}}
\affil[2]{Institute of Science Tokyo, Japan}%, \texttt{sumita@comp.isct.ac.jp}}
\date{}
\begin{document}
\maketitle

% \author[ut]{Yasushi Kawase}
% \ead{kawase@mist.i.u-tokyo.ac.jp}
% \author[st]{Hanna Sumita}
% \ead{sumita@comp.isct.ac.jp}
% \author[st]{Yu Yokoi}
% \ead{yokoi@comp.isct.ac.jp}

% \address[ut]{University of Tokyo, 7-3-1, Hongo, Bunkyo-ku, Tokyo, 113-8656, Japan}
% \address[st]{Institute of Science Tokyo, 2-12-1, Ookayama, Meguro-ku, Tokyo, 152-8550, Japan}

%\maketitle

\begin{abstract}
We investigate the problem of random assignment of indivisible goods, in which each agent has an ordinal preference and a constraint. Our goal is to characterize the conditions under which there always exists a random assignment that simultaneously satisfies efficiency and envy-freeness. The probabilistic serial mechanism ensures the existence of such an assignment for the unconstrained setting. In this paper, we consider a more general setting in which each agent can consume a set of items only if the set satisfies her feasibility constraint. Such constraints must be taken into account in student course placements, employee shift assignments, and so on. We demonstrate that an efficient and envy-free assignment may not exist even for the simple case of partition matroid constraints, where the items are categorized, and each agent demands one item from each category. We then identify special cases in which an efficient and envy-free assignment always exists. For these cases, the probabilistic serial cannot be naturally extended; therefore, we provide mechanisms to find the desired assignment using various approaches.
\end{abstract}

% \begin{keyword}
% lottery assignment; sd-efficiency; sd-envy-freeness; matroid
% \end{keyword}

\section{Introduction}
Assigning indivisible items to agents with preferences is one of the most fundamental problems in computer science and economics~\cite{AGT,rothe2015}.
Examples of such problems include university housing assignments, student course placements, employee shift assignments, and professional sports drafts.
In these kinds of problems, we are given a set of agents, a set of indivisible items, and preferences of the agents.
The goal of the problem is to find an assignment that satisfies efficiency and fairness. %with regard to the cardinal preferences behind the ordinal ones.
This study deals with the case where only ordinal information on preferences is available. Such an assumption is common in the literature because eliciting precise cardinal preferences would be difficult in practice (see Bogomolnaia and Moulin~\cite{BM2001} for more detailed justifications).

%SD
Randomization is frequently used to achieve both efficiency and fairness when assigning indivisible items.
Such a randomized assignment is referred to as \emph{lottery assignment}.
For example, lotteries are used in the draft process of many professional sports leagues.\footnote{Please refer to \url{https://en.wikipedia.org/wiki/Draft_(sports)}}
The standard way to define efficiency and fairness for a lottery assignment when only ordinal preferences are available is to use stochastic dominance (SD) relation. 
An agent prefers one lottery assignment over another in terms of the SD relation if she obtains at least as much utility on average from the former assignment as the latter for all possible cardinal utilities consistent with the revealed ordinal preference.

We consider \emph{sd-efficiency} as an efficiency concept, which states that no agent can be made better off without making at least one other agent worse off with respect to the SD relation. The sd-efficiency means efficiency in the ex ante sense and also leads to efficiency in the ex post sense~\cite{BM2001}.
Additionally, as a concept of fairness, we consider \emph{sd-envy-freeness}, which states that every agent prefers her (ex ante) assignment to that of every other agent with respect to the SD relation. 
%The sd-envy-freeness is a stronger concept of fairness than the concept of \emph{equal treatment of equals}.
Note that the sd-envy-freeness guarantees fairness in the ex ante sense but not in the ex post sense. The ex post unfairness is inevitable in the assignment of indivisible items.
%For other efficiency and fairness criteria, see, e.g., \citet{AGMW2015}.
We also examine some other efficiency and fairness criteria.
In a random assignment problem in which each agent receives one object, Bogomolnaia and Moulin~\cite{BM2001} proposed the \emph{probabilistic serial (PS)} mechanism. 
In the mechanism, agents ``eat'' their preferred goods at an equal rate until all goods are consumed.
This outputs a lottery assignment that is both sd-efficient and sd-envy-free.
Kojima~\cite{kojima2009} generalized this result to the case where each agent can receive more than one item and the agents' preferences are additively separable over the items.

Note that these studies focused on the unconstrained case.
In reality, however, assignment problems frequently involve constraints. Motivated by real-world applications such as refugee resettlement~\cite{delacretaz2016refugee}, college admissions with budget constraints~\cite{abizada2016stability}, 
and day-care allocation~\cite{okumura2019school}, assignment (or matching) problems under constraints have recently been an active research subject. 
As for random assignment under constraints,
Aziz and Brandl~\cite{AB2022} proposed a generalized PS mechanism, called \emph{vigilant eating rule} (VER), for a constrained case.
This mechanism produces a random assignment that satisfies sd-efficiency and equal treatment of equals, which is a weaker fairness notion than our sd-envy-freeness. However, VER may not produce an sd-envy-free lottery assignment.
%the lottery assignments produced by VER are not necessarily sd-envy-free.
% \citet{Balbuzanov2022}

In this study, we seek to attain sd-efficiency and sd-envy-freeness in a general setting where each agent can consume a set of items only if it satisfies her feasibility constraints. 
We assume that constraints satisfy the \emph{hereditary property}, meaning that any subset of a feasible set is also feasible.
%%%
A typical example of the hereditary property is a knapsack constraint, which represents the capacity of a limited resource, such as budget, time, or space.
We have a particular interest in matroid constraints, which is a subclass of hereditary constraints and is a generalization of constraints that arise in the multi-type resource allocation problem.
The class of matroids is expressive enough to represent various constraints that naturally arise in many real-life assignment problems.
For example, in the context of weekly employee shift assignments, if an employee can work at most one time slot per day, then her feasibility constraint is represented by a partition matroid.
Even if she additionally declares that she can work at most three days a week, then her feasibility constraint is still a matroid (for the formal definition, see Model section). 
Furthermore, it is known that matroid structure provides fruitful results in many other related assignment or matching problems~\cite{babaioff2020fair,benabbou2020,BarmanVermaAAMAS2021,goko2024fair,IK2022}.

This study aimed to identify the settings in which sd-efficiency and sd-envy-freeness are compatible.
We demonstrate that an sd-efficient and sd-envy-free lottery assignment may not exist even for the simple case of partition matroid constraints. We then identify special cases in which an sd-efficient and sd-envy-free lottery assignment always exists. Moreover, for such cases, we provide mechanisms to find the desired lottery assignment.
%We identify the condition under which an sd-efficient and sd-envy-free assignment exists. Our contributions are detailed below.
This study does not focus on the strategic issue because no mechanism simultaneously satisfies sd-efficiency, sd-envy-freeness, and sd-weak-strategy-proofness (see \ref{sec:SP} for details).
%, even for $2$ agents with no constraints~\cite{kojima2009}.

\subsection{Our contributions}
We investigate the existence of sd-efficient and sd-envy-free assignments in 16 settings according to the following:
(i) the number of agents is $2$ or arbitrary $n$,
(ii) the constraints are matroids or general hereditary constraints,
(iii) the constraints of the agents are identical or heterogeneous, and 
(iv) the ordinal preferences of the agents are identical or heterogeneous.

\begin{figure*}
\centering
\definecolor{myred}{rgb}{1,0.5,0.5}
\definecolor{mygreen}{rgb}{0.4,.95,0.4}
\scalebox{.9}{
\begin{tikzpicture}[scale=.95]
  \tikzset{
    block/.style={rectangle, draw=black, inner sep=4pt,align=center, text centered, text width=1.25cm, text height=.2cm,font=\small},
    label/.style={font=\small, fill=white, fill opacity=.7, text opacity=1}
  };
  % node position
  \foreach \a/\x in {0/i,1/h} {
    \foreach \b/\y in {0/i,1/h} {
      \foreach \c/\z in {0/m,1/g} {
        \foreach \d/\w in {0/$2$,1/$n$} {
          \node[block] (\a\b\c\d) at ($\a*(-2,1.3)+\b*(0,1.3)+\c*(2,1.3)+\d*(7,1.3)$) {\w,\z,\x,\y};
        }
      }
    }
  }
  \begin{scope}[on background layer]
  \foreach \a in {0,1} {
    \foreach \b in {0,1} {
      \foreach \c in {0,1} {
        \draw[gray] (0\a\b\c) -- (1\a\b\c);
        \draw[gray] (\a0\b\c) -- (\a1\b\c);
        \draw[gray] (\a\b0\c) -- (\a\b1\c);
        \draw[gray] (\a\b\c0) -- (\a\b\c1);
      }
    }
  }
  % Theorem label
  \node[label, xshift=-.7cm, yshift=-0.45cm] at (1100) {Theorem~\ref{thm:2-mat-het-het}};
  \node[label, xshift=-.7cm, yshift=-0.45cm] at (1001) {Theorem~\ref{thm:n-mat-het-id}};
  \node[label, xshift=.7cm, yshift=-0.45cm] at (0011) {Theorem~\ref{thm:n-gen-id-id}};
  \node[label, yshift=0.45cm] at (1010) {Theorem~\ref{thm:2-gen-het-id}};
  \node[label, yshift=0.45cm] at (0101) {Theorem~\ref{thm:3-mat-id-het}};

  \foreach \a/\b in {1100/1000,1100/0100,1001/0001,1001/1000,0011/0010,0011/0001,0001/0000,0010/0000,0100/0000,1000/0000} \draw[ultra thick, mygreen, opacity=.5] (\a) -- (\b);
  \foreach \a/\b in {1010/1110,1010/1011,0101/1101,0101/0111,1110/1111,1101/1111,1011/1111,0111/1111} \draw[ultra thick, myred, opacity=.5] (\a) -- (\b);
  
  % exist
  \foreach \a in {1100,1001,0011} \node[block,fill=mygreen,draw=mygreen,ultra thick] at (\a) {};
  \foreach \a in {0000,0001,0010,0100,1000} \node[block,fill=mygreen!10,draw=mygreen,ultra thick] at (\a) {};

  % not exist
  \foreach \a in {1010,0101} \node[block,fill=myred,draw=myred,ultra thick] at (\a) {};  
  \foreach \a in {1111,1110,1101,1011,0111} \node[block,fill=myred!10,draw=myred,ultra thick] at (\a) {};

  \node[block,fill=white,draw=white,ultra thick] at (0110) {};
  \end{scope}
\end{tikzpicture}}
\caption{Summary of our results on the existence of an sd-efficient and sd-envy-free assignment. Each of the $16$ cases is identified by four characters, such as ``$2$,m,i,i.'' The first, second, third, and fourth characters, respectively, indicate whether there are $2$ or an arbitrary $n$ number of agents, whether the constraints are matroids or general, whether the constraints are identical or heterogeneous, and whether the preferences are identical or heterogeneous. For each case, the box is painted green if such a lottery assignment always exists and red otherwise.}\label{fig:our_results}
\end{figure*}
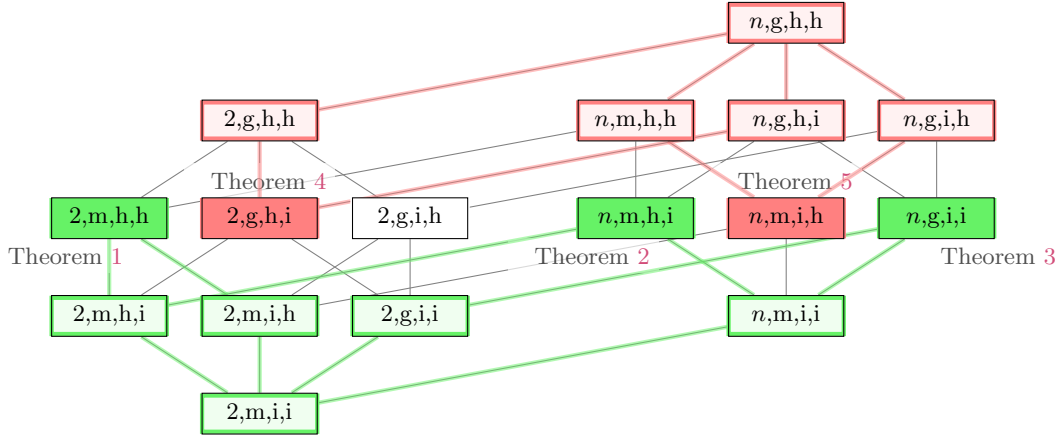

% impossibility
We demonstrate the impossibility of an sd-efficient and sd-envy-free assignment even when there are either
\begin{itemize}
    \item %the preferences are identical and the number of agents is $2$
    $2$ agents with identical preferences
    (Theorem~\ref{thm:2-gen-het-id}) or 
    \item %the constraints are an identical partition matroid and the number of agents is $3$ 
    $3$ agents with identical partition matroid constraints
    (Theorem~\ref{thm:3-mat-id-het}).
\end{itemize}
% possibility
As tractability results, we demonstrate that an sd-efficient and sd-envy-free assignment always exists when there are
\begin{itemize}
    \item %the constraints are matroids and the number of agents is $2$
    $2$ agents with matroid constraints
    (Theorem~\ref{thm:2-mat-het-het}),
    \item %the preferences are identical and the constraints are matroids
    agents with identical preferences and heterogeneous matroid constraints
    (Theorem~\ref{thm:n-mat-het-id}), or 
    \item %the preferences and the constraints are identical
    identical agents %with identical preferences and constraints
    (Theorem~\ref{thm:n-gen-id-id}).
\end{itemize}
Moreover, we provide polynomial-time algorithms that find desired assignments in the settings of Theorems~\ref{thm:2-mat-het-het} and \ref{thm:n-mat-het-id}.
By considering the inclusion relation, we obtain the results shown in Figure~\ref{fig:our_results}.
The existence of an sd-efficient and sd-envy-free assignment is open when there are $2$ agents with identical constraints.

In terms of computational complexity, determining the existence of an sd-efficient and sd-envy-free lottery assignment is NP-complete even when there are two agents with identical preferences and constraints (Theorem~\ref{thm:NPhard}).
We provide an exponential-time algorithm to find an sd-efficient and sd-envy-free lottery assignment if it exists for the general setting.

We investigate \emph{possible-envy-freeness}, \emph{anonymity}, \emph{necessarily Pareto-efficiency}, and \emph{sd-proportionality} as other efficiency and fairness notions (see Section~\ref{subsec:desirable_properties} and \ref{app:others}).

\subsection{Related work}
Random assignment problems under partition matroid constraints are studied under the name of multi-type resource allocation problem~\cite{MT2015,MX2016,WSG+2020,GSW+2021}.
Different from our setting, these studies deal with a case where the preferences of agents are defined over bundles.
% \citet{KS2006}: PSのタイ有りの場合への拡張

%Although we deal with constraints on the agent side, constraints on the item side have also been studied~\cite{FSZ2018,BCKM2013}.
The above-mentioned works and the present work deal with constraints on the agent side. There are also studies on random assignment with constraints on the item side ~\cite{FSZ2018,BCKM2013}.
Fujishige et al.~\cite{FSZ2018} provided an extension of the PS mechanism that outputs an sd-efficient and sd-envy-free assignment if the set of feasible integral vectors of items forms an integral polymatroid.
Their proof heavily depends on the (generalized) Birkhoff--von Neumann theorem.
%: every fractional assignment can be expressed as a probability distribution over deterministic assignments.
Note that the theorem also holds for our problem if the constraints are matroids.
This leads us to expect that the PS mechanism produces an sd-efficient and sd-envy-free assignment when the constraints are matroids, but this is not the case, as we will show in Theorem~\ref{thm:3-mat-id-het}.

% cardinal preference
%This study focuses on the case where only ordinal preferences are given, but research on the case where cardinal utilities are also given has also been conducted. 
For the cardinal case, Cole and Tao~\cite{CT2021} proved that there always exists a Pareto-efficient and envy-free lottery assignment. 
Their framework is so general that any partition-based utility functions (including additive utility functions with any constraints) can be handled. 
Their proof is based on fixed-point arguments and does not imply polynomial-time algorithms. 
Kawase and Sumita~\cite{KawaseSumita2020, KS2024} studied the computational complexity of finding a max-min fair lottery assignment under envy-free constraint in a cardinal setting.

\section{Model}\label{sec:model}
For a nonnegative integer $k$, we write $[k]$ to denote $\{1,2,\dots,k\}$.
An instance of our problem is a tuple $(N,E,({\succ_i}, \cF_i)_{i\in N})$, where
$N=[n]$ represents the set of agents and $E=\{e_1,e_2,\dots,e_m\}$ represents the set of indivisible items.
Each agent $i\in N$ has a strict preference $\succ_i$ over $E$ and can consume a set of items in $\cF_i\subseteq 2^E$, which is the feasible set family of agent $i$.
We assume that $\cF_i$ is given by a membership oracle for each $i\in N$.
The preferences over sets of items are additively separable across items, meaning that each agent $i$ has a cardinal utility function $u_i\colon E\to\mathbb{R}_{++}$, and 
her utility for a bundle $E'\in\cF_i$ is $\sum_{e\in E'}u_i(e)$.
%when she receives $E'\in\cF_i$ her utility is $\sum_{e\in E'}u_i(e)$. 
Here, $\mathbb{R}_{++}$ is the set of positive real numbers.
We assume that the preference of each agent $i$ has no ties %(i.e., $u_i(e)\neq u_i(e')$ for any distinct $e,e'\in E$) 
and that the central authority knows only the ordinal preferences $\succ_i$ that are consistent with $u_i$. In other words, $\succ_i$ is a strict order on $E$ such that $e\succ_i e'$ if and only if $u_i(e)>u_i(e')$.

For each agent $i\in N$, we assume that the pair $(E, \cF_i)$ forms an \emph{independence system}: the feasible set family $\cF_i\subseteq 2^E$ is nonempty and satisfies the \emph{hereditary property}, that is, $X\subseteq Y\in\cF_i$ implies $X\in\cF_i$. 
We denote by $\conv(\cF_i)$ the convex hull of the characteristic vectors of the members of $\cF_i$, where the characteristic vector of $X\in\cF_i$ is 
a vector in $\{0,1\}^E$ whose component corresponding to $e\in E$ is $1$ if and only if $e\in X$,
and the convex hull of $S\subseteq \mathbb{R}^E$ is the smallest convex set containing $S$.
We note that $\conv(\cF_i) \subseteq [0,1]^E$. 
We will also consider a special case where each $(E,\cF_i)$ is a \emph{matroid}, which is an independence system satisfying a property called the \emph{augmentation axiom}: if $X,Y\in\cF_i$ and $|X|<|Y|$ then there exists $e\in Y\setminus X$ such that $X\cup\{e\}\in\cF_i$.
%The \emph{rank} of a matroid $(E,\cF_i)$ is a function $r_i\colon 2^E\to\mathbb{Z}_+$ such that $r_i(X)=\max\{|X'|\mid X'\subseteq X,\,X'\in\cF\}$ for any $X\subseteq E$.
A simple example of a matroid is a {\em partition matroid}, which represents a constraint in which items are categorized, and the number of items we can take from each category is constrained. More precisely, a partition matroid $(E, \cF)$ is determined by a partition $E_1, E_2,\dots, E_k$ of $E$ and capacities $q_1, q_2, \dots, q_k\in \mathbb{Z}_+$, and $\cF$ is of the form $\{X\subseteq E\mid |X\cap E_i|\leq q_i~(\forall i \in [k])\}$.

A deterministic {\em assignment} is a list $\bA=(A_1,\dots,A_n)$ of subsets of $E$ such that 
(i) $A_i\in \cF_i$ for all $i\in N$ and
(ii) $A_i\cap A_j=\emptyset$ for all distinct $i,j\in N$.
Let $\cA$ be the set of all deterministic assignments.
A \emph{lottery assignment} is a probability distribution over $\cA$.
%We will simply refer to a lottery assignment as an assignment when it is clear from the context.
We denote the set of all lottery assignments by $\Delta(\cA)$.

A \emph{fractional assignment} is a matrix $\pi=(\pi_{ie})_{i\in N, e\in E}\in\mathbb{R}^{N\times E}$ such that, for every item $e\in E$,  $\sum_{i\in N} \pi_{ie}\leq 1$. We interpret $\pi_{ie}$ as the probability that agent $i\in N$ receives item $e\in E$. 
For each $i\in N$, we denote the row in $\pi$ corresponding to agent $i$ by $\pi_i$, that is, $\pi_i = (\pi_{ie})_{e\in E} \in [0,1]^E$.
A lottery assignment $p\in \Delta(\cA)$ induces a fractional assignment $\pi\in\mathbb{R}^{N\times E}$ such that $\pi_{ie}=\Pr_{\bA\sim p}[e\in A_i]=\sum_{\bA\in\cA:\, e\in A_i}p_{\bA}$ for all $i\in N$ and $e\in E$. We will write $\pi^p$ for the fractional assignment induced from $p$. A fractional assignment is called {\em feasible} if it is induced from some lottery assignment.

Let $\conv(\cA)\subseteq \mathbb{R}^{N\times E}$ be the convex hull of the characteristic vectors of the members of $\cA$.
By definition, a fractional assignment belongs to $\conv(\cA)$ if and only if it is feasible, i.e., induced from some lottery assignment $p\in \Delta(\cA)$.
For any feasible fractional assignment $\pi\in \conv(\cA)$, a lottery assignment inducing $\pi$ is not unique in general.
According to Carath\'eodory's theorem~(see, e.g., Schrijver~\cite[p.94]{Schrijver1998}), there exists such a lottery assignment with a support size of not more than $|N|\cdot |E|+1$.

\subsection{Desirable properties}\label{subsec:desirable_properties}
For a preference $\succ_i$, let $U(\succ_i,e)\coloneqq \{e'\in E\mid e'\succeq_i e\}$ be the set of items that are not worse than $e$ with respect to $\succ_i$.
We say that $x\in\mathbb{R}_+^E$ \emph{weakly stochastically dominates} $y\in\mathbb{R}_+^E$, denoted by $x\succeq_i^\sd y$, if $\sum_{e'\in U(\succ_i,e)} x_{e'}\ge \sum_{e'\in U(\succ_i,e)} y_{e'}$ for all $e\in E$.
If $x\succeq_i^\sd y$ and $x\ne y$, we say that $x$ \emph{stochastically dominates} $y$ and denote $x\succ_i^\sd y$.
% We say that $x\in\mathbb{R}_+^E$ \emph{stochastically dominates} $y\in\mathbb{R}_+^E$, denoted by $x\succ_i^\sd y$, if $\sum_{e'\in U(\succ_i,e)} x_{e'}\ge \sum_{e'\in U(\succ_i,e)} y_{e'}$ for all $e\in E$ and the strict inequality holds for some element.
% If $x\succ_i^\sd y$ or $x= y$, we say $x$ \emph{weakly stochastically dominates} $y$ and denote $x\succeq_i^\sd y$.
Note that $x$ stochastically dominates $y$ if and only if the expected utility of $x$ is greater than that of $y$ for all possible cardinal utilities consistent with $\succ_i$.

Pareto-efficiency is a standard efficiency concept where no agents can be made better off without making at least one other agent worse off.
A natural notion of efficiency for lottery assignments is defined as Pareto-efficiency with respect to the SD relation.
\begin{definition}[sd-efficiency]\label{def:sd-efficient}
A lottery assignment $p\in \Delta(\cA)$ is called \emph{sd-efficient} (also called \emph{ordinally efficient} or \emph{necessarily Pareto-efficient}) if 
there is no lottery assignment $q\in\Delta(\cA)$ that satisfies $\pi_i^q \succeq_i^\sd \pi_i^p$ for all $i\in N$ and $\pi_j^q \succ_j^\sd \pi_j^p$ for some $j\in N$.
%there is no other lottery assignment $q\in\Delta(\cA)$ that satisfies $\pi_i^q \succeq_i^\sd \pi_i^p$ for all $i\in N$.
\end{definition}
Note that, for any lottery assignment $p\in \Delta(\cA)$, we have $\sum_{e'\in U(\succ_i,e)} \pi^p_{ie'}=\sum_{e'\in U(\succ_i,e)}\sum_{\bA\in\cA:\,e'\in A_i} p_{\bA}=\sum_{\bA\in\cA}p_{\bA}|A_i\cap U(\succ_i,e)|$, and hence the condition $\pi_i^q\succeq_i^\sd \pi_i^p$ in Definition~\ref{def:sd-efficient} is equivalent to the condition 
\begin{align}
\sum_{\mathclap{\bA\in\cA}}q_{\bA}|A_i{\cap} U(\succ_i,e)|\ge
\sum_{\mathclap{\bA\in\cA}}p_{\bA}|A_i{\cap} U(\succ_i,e)|\ \ (\forall e\in E).
\end{align}
In addition, a lottery assignment is sd-efficient if and only if it maximizes utilitarian social welfare for some possible cardinal utilities consistent with $(\succ_i)_{i\in N}$.

A weaker notion of efficiency can be defined as an ex post sense.
A lottery assignment $p\in \Delta(\cA)$ is called \emph{ex post efficient} if, for any $\bA\in \cA$ with $p_{\bA}>0$, a lottery assignment that takes $\bA$ with probability $1$ is sd-efficient.
By definition, sd-efficiency implies ex post efficiency. On the other hand, ex post efficiency does not imply sd-efficiency.
%First, we observe that ex post efficiency does not imply sd-efficiency even with two agents with identical constraints.
\begin{example}[\mbox{ex post efficiency does not imply sd-efficiency}]\label{ex:simple}
Consider an instance of our problem $(N,E,(\succ_i, \cF_i)_{i\in N})$ where
$N=\{1,2\}$, $E=\{a,b,c,d\}$, $\cF_i=\{X\subseteq E \mid |X\cap\{c,d\}|\le 1 \text{ and }|X|\le 2\}$, and $a \succ_i b \succ_i c \succ_i d$ for both $i=1,2$.
Note that $(E,\cF_1)~(=(E,\cF_2))$ is a matroid.

Let $p$ be the lottery assignment that takes each of deterministic assignments $(A_1, A_2)=(\{a,c\}, \{b,d\})$ and $(\{b,d\}, \{a,c\})$ with probability $0.5$. 
Also, let $q$ be the lottery assignment that takes each of $(A_1, A_2)=(\{a,b\}, \{c\})$ and $(\{c\}, \{a,b\})$ with probability $0.5$. 
It is not difficult to check that the lottery assignments $p$ and $q$ are ex post efficient.
Note that $p$ and $q$ respectively induce the following fractional assignments:
\begin{align}
%\pi^{p}=
% {\small\begin{pNiceMatrix}[first-row,first-col]
% & a & b & c & d \\
% 1  & \tfrac{1}{2} & \tfrac{1}{2} & \tfrac{1}{2} & \tfrac{1}{2}\\[2mm]
% 2  & \tfrac{1}{2} & \tfrac{1}{2} & \tfrac{1}{2} & \tfrac{1}{2}
% \end{pNiceMatrix}}
% \quad\text{and}\quad
% \pi^{q}=
% {\small\begin{pNiceMatrix}[first-row,first-col]
% & a & b & c & d \\
% 1  & \tfrac{1}{2} & \tfrac{1}{2} & \tfrac{1}{2} & 0\\[2mm]
% 2  & \tfrac{1}{2} & \tfrac{1}{2} & \tfrac{1}{2} & 0
% \end{pNiceMatrix}}.
\pi^{p}=
{\begin{pNiceMatrix}[first-row,first-col]
& a & b & c & d \\
1  & 1/2 & 1/2 & 1/2 & 1/2\\[2mm]
2  & 1/2 & 1/2 & 1/2 & 1/2
\end{pNiceMatrix}}
~~\text{and}~~
\pi^{q}=
{\begin{pNiceMatrix}[first-row,first-col]
& a & b & c & d \\
1  & 1/2 & 1/2 & 1/2 & 0\\[2mm]
2  & 1/2 & 1/2 & 1/2 & 0
\end{pNiceMatrix}}.
\end{align}
Thus, $q$ is not sd-efficient because 
$\pi_i^p\succeq^\sd \pi_i^q$ for all $i\in N$.
On the other hand, $p$ is sd-efficient.
% The assignments $\pi^{(2)}$ and $\pi^{(3)}$ are deterministic feasible assignments, and $\pi^{(1)}$ is a feasible fractional assignment induced from a lottery assignments that takes each of $(A_1, A_2)=(\{a,c\}, \{b,d\})$ and $(\{b,d\}, \{a,c\})$ with probability $0.5$. 
% These three fractional assignments are sd-efficient (and hence ex post efficient). 
% Note that the feasible fractional assignment $(\pi^{(2)}+\pi^{(3)})/2$ is ex-post efficient but not sd-efficient because it is dominated by $\pi^{(1)}$.
\end{example}

\emph{Necessarily Pareto-efficient} is a stronger notion of efficiency, which means that a lottery assignment is Pareto-efficient under every possible cardinal utility consistent with the given ordinal preferences.
This notion is outside the scope of this paper because no assignment may satisfy it, even in the single-agent case (Proposition~\ref{prop:no-necessaryPO}). 
%there are two items and two agents with the same ordinal preference and the capacity one constraint, 

As a notion of fairness, we consider envy-freeness. % in our constrained setting.
For the unconstrained setting, a standard definition of sd-envy-freeness requires a fractional assignment to satisfy $\pi^p_i\succeq_i^\sd \pi^p_j$ for any agents $i,j\in N$. This condition is equivalent to the expected utility of the fractional assignment of agent $i$ being no worse than that of any other agent $j$ with respect to any cardinal utility consistent to $\succ_i$~\cite{AGMW2015}.
In our setting, however, this equivalence does not hold due to the existence of constraints. Indeed, the bundle assigned to agent $j$ is not feasible for agent $i$ in general. Therefore, we have to take constraints into account when considering each agent's envy toward other agents. 
% For a utility function $u_i$ consistent to $\succ_i$, let us denote by $\tilde{u}_i(A_j)$ the evaluation of $j$'s feasible bundle $A_j\in \cF_j$ from $i$'s perspective. That is, $\tilde{u}_i(A_j)= \max\set{\sum_{e\in X}u_i(e)|X\subseteq A_j,\ X\in \cF_i}$.
For a utility function $u_i$ consistent to $\succ_i$, let $\tilde{u}_i(X)$ be $i$'s evaluation of a bundle $X\subseteq E$ (that may be infeasible for $i$ to consume). That is, $\tilde{u}_i(X)= \max\set{\sum_{e\in Y}u_i(e)|Y\subseteq X,\ Y\in \cF_i}$.
Then, a natural generalization of sd-envy-freeness is to impose a lottery assignment $p\in\Delta(\cA)$ to satisfy
\begin{align}
\begin{split}
\mathbb{E}_{\bA\sim p}&[\tilde{u}_i(A_i)] \ge \mathbb{E}_{\bA\sim p}[\tilde{u}_i(A_j)] \quad%\\&
(\forall i, j\in N, \forall u_i\in \mathbb{R}_{++}^{E} \text{ consistent to } \succ_i).
\end{split}\label{eq:sdEF0}
\end{align}
It turns out that the condition \eqref{eq:sdEF0} is equivalent to the condition \eqref{eq:sdEF} below.
Since \eqref{eq:sdEF} does not use utility functions,
% is expressed without using utility functions. 
we adopt \eqref{eq:sdEF} as the definition of sd-envy-freeness.
We show the equivalence to \eqref{eq:sdEF0} in Proposition~\ref{prop:sdEF}.
The envy-freeness with respect to the SD relation is defined as follows.
\begin{definition}[sd-envy-freeness]\label{def:sd-envy-free}
A lottery assignment $p\in\Delta(\cA)$ is called \emph{sd-envy-free} (also called \emph{necessary envy-free} or \emph{not envy-possible}) if
\begin{align}
\begin{split}
%\hspace{-5mm}
\sum_{\bA\in\cA}p_{\bA}|A_i\cap U(\succ_i, e)|\ge \sum_{\bA\in\cA}p_{\bA}\max_{\substack{\scriptscriptstyle Y\subseteq A_j: \\\scriptscriptstyle Y\in\cF_i}}|Y\cap U(\succ_i, e)| \quad%\\
(\forall i,j\in N,\ \forall e\in E).
\end{split}
\label{eq:sdEF}
\end{align}
\end{definition}
Note that, if the constraints are identical (i.e., $\cF_1=\dots=\cF_n$), the condition \eqref{eq:sdEF} coincides with $\pi^p_i\succeq_i^\sd \pi^p_j$ (recall that $\sum_{e'\in U(\succ_i,e)} \pi^p_{ie'}=\sum_{\bA\in\cA}p_{\bA}|A_i\cap U(\succ_i,e)|$).
Hence, Definition~\ref{def:sd-envy-free} indeed generalizes the standard definition of sd-envy-freeness.
\begin{proposition}\label{prop:sdEF}
A lottery assignment $p\in\Delta(\cA)$ is sd-envy-free if and only if it satisfies \eqref{eq:sdEF0}.
\end{proposition}
The proof of this fact can be found in \ref{app:sdEF}.

In contrast to sd-efficiency, which is defined only by the induced fractional assignment $\pi^p$, the definition of sd-envy-freeness requires the information of a lottery assignment $p$ itself. 
That is, sd-envy-freeness is a property of lottery assignments and cannot be that of fractional assignments in our constrained setting.
The following example gives two lottery assignments that induce the same fractional assignment, but only one of them is sd-envy-free.
\begin{example}
Consider an instance $(N,E,(\succ_i, \cF_i)_{i\in N})$ where
$N=\{1,2\}$, $E=\{a,b\}$, $\cF_1=\big\{\emptyset,\{a\},\{b\}\big\}$, $\cF_2=\big\{\emptyset,\{a\},\{b\},\{a,b\}\big\}$, and $a \succ_i b$ for $i=1,2$.
Then, the lottery assignment $p$ that takes each of deterministic assignments $(A_1,A_2)=(\{a\},\emptyset)$ and $(\emptyset,\{a,b\})$ with probability $0.5$ is sd-envy-free.
In contrast, the lottery assignment $q$ that takes each of deterministic assignments $(A_1,A_2)=(\{a\},\{b\})$ and $(\emptyset,\{a\})$ with probability $0.5$ is not sd-envy-free because %$the agent $1$ sd-envies the agent $2$.
$\sum_{\bA\in\cA}q_{\bA}|A_1\cap U(\succ_i, b)|=0.5$ is smaller than
$\sum_{\bA\in\cA}q_{\bA}\max_{Y\subseteq A_2:\,Y\in\cF_1}|Y\cap U(\succ_i, b)|=1$.
However, the two lottery assignments lead to the same fractional assignment, i.e., $\pi^p=\pi^q$.
\end{example}

We call a lottery assignment $p$ \emph{possible-envy-free} if, for each agent $i$, there exists a cardinal utility $u_i\in\mathbb{R}_{++}^E$ consistent to $\succ_i$ such that $i$ does not envy any other agent in terms of expectation (i.e., $\mathbb{E}_{\bA\sim p}[\tilde{u}_i(A_i)] \ge \mathbb{E}_{\bA\sim p}[\tilde{u}_i(A_j)]$ for all $j\in N$).
Hence, a lottery assignment $p$ is sd-efficient and possible-envy-free if there exist consistent cardinal utilities that make $p$ Pareto-efficient and envy-free in the cardinal sense. 
Combined with a fact known for a cardinal setting, this implies the existence of an sd-efficient and possible-envy-free lottery assignment in our setting. That is, to find such a lottery assignment, it is sufficient to fix a consistent utility function for each agent (say, $u_i(e)=|\{e'\in E\mid e\succeq e'\}|$ for each $i\in N$ and $e\in E$) and take a lottery assignment that satisfies Pareto-efficiency and envy-freeness with respect to these utility functions, where the existence of such an assignment is guaranteed~\cite{CT2021}.

% other notion of fairness

% PSがうまくいかない例
For the unconstrained setting, the PS mechanism is known to satisfy both sd-efficiency and sd-envy-freeness.
Therefore, to achieve these two desirable properties in our constrained setting, a natural approach is to consider the generalized version of the PS mechanism in which each agent consumes the best remaining item while preserving feasibility. 
The vigilant eating rule (VER) mechanism in \cite{AB2022} includes this generalization. 
However, the generalized PS mechanism does not guarantee sd-envy-freeness, as shown in the following example.
%To achieve sd-efficiency and sd-envy-freeness in the current constrained setting, it may seem natural to consider the generalized version of the PS mechanism in which each agent consumes the best remaining item while preserving feasibility. 
%This na\"ive approach, however, does not always work as shown in the following example.
% \begin{example}[PS does not work]\label{ex:PSisBad}
% Consider an instance $(N,E,(\succ_i, \cF_i)_{i\in N})$ where
% $N=\{1,2\}$, $E=\{a,b,c,d\}$, $\cF_1=\cF_2=\{X\subseteq E \mid |X\cap\{a,b\}|\le 1 \text{ and }|X\cap\{c,d\}|\le 1\}$, and
% \begin{align}
%     &a \succ_1 b \succ_1 c \succ_1 d,\\
%     &c \succ_2 b \succ_2 a \succ_2 d.
% \end{align}
% Note that $(E,\cF_1)~(=(E,\cF_2))$ is a matroid.
% For this instance, a generalized PS mechanism outputs the deterministic assignment $(\{a,d\},\{b,c\})$, but this is not sd-envy-free.
% There is a unique sd-efficient and sd-envy-free fractional assignment 
% \begin{align}
% \begin{blockarray}{r*{4}{>{\scriptstyle\color{gray}}c}}
% & a & b & c & d \\
% \begin{block}{>{\scriptstyle\color{gray}}r!{\,}(cccc)}
% 1  & 1/2 & 1/2 & 1/2 & 1/2\\
% 2  & 1/2 & 1/2 & 1/2 & 1/2\\
% \end{block}
% \end{blockarray}.
% \end{align}
% \end{example}
% 制約が異なり選好は同じ場合に変更
\begin{example}[generalized PS is not sd-envy-free]\label{ex:PSisBad}
Consider an instance $(N,E,(\succ_i, \cF_i)_{i\in N})$ where
$N=\{1,2\}$, $E=\{e_1,e_2,\dots,e_5\}$, $e_1\succ_i e_2\succ_i\dots\succ_i e_5~(i=1,2)$, 
%and 
\begin{align}
    \cF_1&=\{X\subseteq E\mid |X\cap\{e_1,e_2,e_3,e_5\}|\le 2\},\ \text{and}\\
    \cF_2&=\{X\subseteq E\mid |X\cap\{e_1,e_2,e_3\}|\le 1\}.
\end{align}
Note that $(E,\cF_1)$ and $(E,\cF_2)$ are matroids.
For this instance, the generalized PS mechanism (the VER mechanism) outputs 
\[
\pi = 
{\begin{pNiceMatrix}[first-col,first-row]
   & e_1 & e_2 & e_3 & e_4 & e_5 \\
1  & 1/2 & 1/2 & 1 & 0 & 0\\
2  & 1/2 & 1/2 & 0 & 1 & 1\\
\end{pNiceMatrix}}.
\]
However, this is not sd-envy-free because the total amount of items assigned to agent $2$ is larger than that of agent $1$, which causes agent $1$ to envy agent $2$.
%(We see that the condition \eqref{eq:sdEF0}, which is equivalent to sd-envy-freeness, is violated for a utility function $u_i$ that has almost the same value for every item.)
%(i.e., the condition \eqref{eq:sdEF} is violated for $i=1$, $j=2$, and $e=e_5$.)
\end{example}

\section{Related Properties of Matroids}
We introduce several matroid properties, which will be used in our analyses.

Let $(E,\cF)$ be a matroid and $\conv(\cF)\subseteq \mathbb{R}_+^E$ be the convex hull of the characteristic vectors of the members of $\cF$.
For two vectors $x,y\in\mathbb{R}^E$, we write $y\le x$ if $y_e\le x_e$ for all $e\in E$.
For any vector $x\in \mathbb{R}_+^E$, define a polytope $\conv(\cF)^x \coloneqq\set{y\in \conv(\cF) \mid y\leq x}\subseteq \mathbb{R}_+^E$. 
Recall that, for any total order $\succ$ on $E$, we denote $s\succ^\sd t$ if $s\in \mathbb{R}_+^E$ stochastically dominates $t\in \mathbb{R}_+^E$ with respect to $\succ$.

We call a vector $s\in\conv(\cF)^x$ \emph{lexicographically maximum} with respect to $\succ$ in $\conv(\cF)^x$ if the value of the highest rank component is as large as possible; subject to this, the value of the next highest rank component is as large as possible, and so on.
The following lemma shows that 
the lexicographically maximum vector stochastically dominates all other vectors in the polytope $\conv(\cF)^x$.
Note that this is a special property of a matroid and is not satisfied in general if $(E, \cF)$ is an arbitrary independence system.%
\footnote{Consider the convex hull of the non-matroid family 
$\cF=\big\{\emptyset, \{e_1\}, \{e_2\}, \{e_3\}, \{e_2,e_3\}\big\}$ and the total order $\succ$ such that $e_1\succ e_2\succ e_3$, and let $x=(x_{e_1},x_{e_2},x_{e_3})=(1,1,1)$. The lexicographically maximal solution is $y^*=(1,0,0)\in \conv(\cF)^x$, but $y^*$ does not stochastically dominate $(0,1,1)\in \conv(\cF)^x$.}
\begin{lemma}\label{lem:lexopt=sdopt}
For any matroid $(E,\cF)$, any vector $x\in \mathbb{R}_+^E$, and any total order $\succ$ on $E$, let $y^*$ be the lexicographically maximum vector in $\conv(\cF)^x$ with respect to $\succ$. Then, $y^*\succeq^\sd y$ holds for every vector $y\in \conv(\cF)^x$.
\end{lemma}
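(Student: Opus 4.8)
The plan is to reduce the claim to the defining inequalities of $\conv(\cF)^x$ and then exhibit a single feasible vector that is simultaneously optimal on every relevant prefix. First I would invoke Proposition~\ref{prop:polymatroid-reduction} to write $\conv(\cF)^x=\set{y\in\mathbb{R}_+^E | y(X)\le r^x(X)~(\forall X\subseteq E)}$, where $r^x$ is a polymatroid rank function. Indexing $E$ so that $g_1\succ g_2\succ\cdots\succ g_m$ and setting $S_k\coloneqq\{g_1,\dots,g_k\}$ (with $S_0=\emptyset$), we have $U(\succ,g_k)=S_k$, so by definition $y^*\succeq^\sd y$ is equivalent to $y^*(S_k)\ge y(S_k)$ for every $k\in[m]$. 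For any $y\in\conv(\cF)^x$ the defining inequality gives $y(S_k)\le r^x(S_k)$; hence it suffices to prove that the lexicographic maximum $y^*$ attains all of these upper bounds with equality, i.e.\ $y^*(S_k)=r^x(S_k)$ for all $k$. This single fact yields $y^*(S_k)=r^x(S_k)\ge y(S_k)$ for every $y$ and every $k$, which is exactly $y^*\succeq^\sd y$.

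To pin down $y^*$ I would introduce the incremental (greedy) vector $\hat y$ defined by $\hat y_{g_k}=r^x(S_k)-r^x(S_{k-1})$. Monotonicity of $r^x$ gives $\hat y\ge 0$, and telescoping (using $r^x(\emptyset)=0$) gives $\hat y(S_k)=r^x(S_k)$ for all $k$. The crucial step is to verify feasibility, $\hat y\in\conv(\cF)^x$, i.e.\ $\hat y(X)\le r^x(X)$ for every $X\subseteq E$: writing the elements of $X$ in increasing index order and applying the diminishing-returns form of submodularity of $r^x$ to each increment, the telescoping sum is bounded by $r^x(X)$. This is exactly Edmonds' greedy characterization of polymatroid vertices, so I expect to cite the submodular structure here rather than recompute. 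Granting feasibility, $\hat y$ meets the bound $r^x(S_k)$ on every prefix simultaneously; in particular it maximizes $y_{g_1}=y(S_1)$, among those maximizers it maximizes $y(S_2)$ and hence $y_{g_2}$, and inductively $\hat y$ is lexicographically maximal. Since the lexicographic maximum over a compact polytope is unique, $y^*=\hat y$, and therefore $y^*(S_k)=r^x(S_k)$, closing the argument.

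The main obstacle is precisely the feasibility of $\hat y$: this is the only place where the matroid (polymatroid) structure is genuinely used, and it is also where a general independence system breaks down. Indeed, in the footnote's example the analogous greedy vector $(1,0,0)$ does lie in the polytope but fails to dominate $(0,1,1)$, because without submodularity the prefix-maximal increments are not jointly achievable by one feasible point. Everything surrounding this step---the translation of $\succeq^\sd$ into prefix-sum comparisons, the polytope upper bound $y(S_k)\le r^x(S_k)$, and the uniqueness of the lexicographic maximum---is routine bookkeeping.
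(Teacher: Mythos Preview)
Your proposal is correct and essentially mirrors the paper's proof: both invoke Proposition~\ref{prop:polymatroid-reduction}, define the greedy vector $\hat y_{g_k}=r^x(S_k)-r^x(S_{k-1})$, verify its feasibility via submodularity (the paper does this by an induction on $\max\{i\mid e_i\in X\}$ that unwinds to the same diminishing-returns inequality you cite), and conclude from $\hat y(S_k)=r^x(S_k)\ge y(S_k)$ that $\hat y$ is simultaneously the lexicographic maximum and stochastically dominant. The only cosmetic difference is that the paper defines $y^*$ to be the greedy vector and then argues it is lex-maximal, whereas you start from the lex-maximal $y^*$ and identify it with $\hat y$ via uniqueness; these are the same argument read in opposite directions.
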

We need some preparation for the proof of Lemma~\ref{lem:lexopt=sdopt}. We first give some basic facts on matroids.

The {\em rank function} $r:2^E\to \mathbb{R}$ of matroid $(E,\cF)$ is defined by $r(X)=\max\{|Y|\mid Y\subseteq X,~Y\in \cF\}$. 
This function satisfies the following properties.
\begin{itemize}
\item $r(X)\in \mathbb{Z}$ and $0\leq r(X)\leq |X|$ for any $X\subseteq E$, % integrality&subcardinality
\item $X\subseteq Y\subseteq E$ implies $r(X)\leq r(Y)$ (monotonicity), and
\item $r(X)+r(Y)\geq r(X\cup Y)+r(X\cap Y)$ for any $X, Y\subseteq E$ (submodularity).
\end{itemize}
Note that, for each $X\subseteq E$, the rank $r(X)$ can be computed in polynomial time by the greedy algorithm with the membership oracle~\cite{Schrijver2003}.
A matroid rank function is a special case of a {\em polymatroid rank function}, which requires monotonicity, submodularity, and $r(\emptyset)=0$ (that is, only the second and third conditions above are required).
Using the rank function of matroid $(E, \cF)$, the polytope $\conv(\cF)$ is expressed as follows:
\[\conv(\cF)=\set{y\in \mathbb{R}_+^E| y(X)\leq r(X) \ (\forall X\subseteq E)},\]
where $y(X)=\sum_{e\in X}y_e$.

For any vector $x\in \mathbb{R}_+^E$,
recall that a polytope $\conv(\cF)^x$ is defined as 
\[\conv(\cF)^x\coloneqq\set{y\in \conv(\cF) \mid y\leq x}\subseteq \mathbb{R}_+^E.\]
Let us define a function $r^x\colon 2^E\to \mathbb{Z}$ by 
$r^x(X)=\min\set{r(Y)+x(X\setminus Y) | Y\subseteq X}$.
The following statement is a specialization of the well-known fact on polymatroid.
\begin{proposition}[{Fujishige~\cite[p.47]{Fujibook}}]\label{prop:polymatroid-reduction}
For any matroid $(E, \cF)$ whose rank function is $r$ and any vector $x\in \mathbb{R}_+^E$, the function $r^x$ is a polymatroid rank function and 
\[\conv(\cF)^x=\set{y\in \mathbb{R}_+^E| y(X)\leq r^x(X)~(\forall X\subseteq E)}.\]
\end{proposition}
% \begin{example}
% Let $E=\{e_1,e_2,e_3,e_4\}$ and $\cF=\{X\subseteq E\mid |X\cap\{e_1,e_3\}|\le 1~\text{and}~|X|\le 2\}$. 
% Then, $\conv(\cF)=\{y\in [0,1]^E\mid y(\{e_1,e_3\})\le 1~\text{and}~y(E)\le 2\}$.
% $x=(x_{e_1},x_{e_2},x_{e_3},x_{e_4})=(1/2,1,1,1)$
% \end{example}

Now, we are ready to show Lemma~\ref{lem:lexopt=sdopt}, which states that the lexicographically maximum vector in $\conv(\cF)^x$ stochastically dominates all other vectors in $\conv(\cF)^x$.
\begin{proof}[Proof of Lemma~\ref{lem:lexopt=sdopt}]
Assume that the elements of $E=\{e_1, e_2, \dots, e_m\}$ satisfy $e_1\succ e_2\succ \cdots \succ e_m$ without losing generality.
For each $k \in [m]$, define $E_k\coloneqq\{e_1,e_2,\dots,e_k\}$, that is, $E_k=\set{e\in E|e\succeq e_k}$ and $E_0=\emptyset$.
Define a vector $y^*\in \mathbb{R}_+^E$ by $y^*(e_i)=r^x(E_i)-r^x(E_{i-1})$ for each $i \in [k]$.
We now show that $y^*$ belongs to $\conv(\cF)^x$. 
From Proposition~\ref{prop:polymatroid-reduction}, it suffices to show that $y^*(X)\leq r^x(X)$ holds for any $X\subseteq E$. We use an induction on $j\coloneqq\max\set{i|e_i\in X}$. The claim is trivial for $j=1$. For $j\geq 2$, we have $y^*(X)=y^*(X-e_j)+y^*(e_j)\leq r^x(X-e_j)+y^*(e_j)=r^x(X-e_j)+r^x(E_j)-r^x(E_{j-1})\leq r^x(X)$, where the first inequality follows from the induction hypothesis and the last one from submodularity.
Thus, $y^*\in \conv(\cF)^x$ is shown.

Note that $y^*(E_i)=r^x(E_i)$ holds for $i \in [k]$.
This together with the fact that any $y\in \conv(\cF)^x$ satisfies $y(E_i)\leq r^x(E_i)$ for $i \in [k]$ implies that $y^*$ is lexicographically maximum with respect to $\succ$ in $\conv(\cF)^x$ and stochastically dominates all other vectors in $\conv(\cF)^x$. 
\end{proof}

For a matroid $(E, \cF)$ and a total order $\succ$ on $E$, let $F\colon \mathbb{R}_+^E\to\mathbb{R}_+^E$ be a function that returns the vector $F[x]$ that is lexicographically maximum with respect to $\succ$ in $\conv(\cF)^x$ for any given vector $x\in \mathbb{R}_+^E$.
We refer to this function $F$ as the {\em choice function} induced from $(E, \cF)$ and $\succ$.
For example, if $E=\{e_1,e_2,e_3,e_4\}$, $\cF=\{X\subseteq E\mid |X\cap\{e_1,e_3\}|\le 1~\text{and}~|X|\le 2\}$, $e_1\succ e_2\succ e_3\succ e_4$, and $x=(x_{e_1},x_{e_2},x_{e_3},x_{e_4})=(0.4,0.8,1,1)$, the induced choice $F[x]$ is $(0.4,0.8,0.6,0.2)$.
The following fact is an immediate consequence of Lemma~\ref{lem:lexopt=sdopt}.
\begin{lemma}\label{lem:choice-sd}
For any matroid $(E, \cF)$ and any total order $\succ$ on $E$, let $F\colon \mathbb{R}_+^E\to\mathbb{R}_+^E$ be the choice function induced from $(E, \cF)$ and $\succ$. For any $x, y\in \mathbb{R}_+^E$, the condition $x_e\geq y_e~(\forall e\in E)$ implies $F[x]\succeq^\sd F[y]$.
\end{lemma}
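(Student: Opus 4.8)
The plan is to reduce the weak stochastic dominance $F[x]\succeq^\sd F[y]$ to a pointwise comparison of the reduced rank functions $r^x$ and $r^y$, exploiting the explicit prefix-sum description of the lexicographic optimum that was established inside the proof of Lemma~\ref{lem:lexopt=sdopt}. First I would relabel $E=\{e_1,\dots,e_m\}$ so that $e_1\succ e_2\succ\cdots\succ e_m$ and write $E_k=\{e_1,\dots,e_k\}$. Since $U(\succ,e_k)=E_k$, the relation $F[x]\succeq^\sd F[y]$ holds precisely when $F[x](E_k)\ge F[y](E_k)$ for every $k\in[m]$: if some prefix inequality is strict we get strict dominance, while if all of them are equalities then $F[x](e_k)=F[x](E_k)-F[x](E_{k-1})=F[y](E_k)-F[y](E_{k-1})=F[y](e_k)$ for all $k$, so $F[x]=F[y]$, which is likewise permitted under $\succeq^\sd$.

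The key recollection is that the proof of Lemma~\ref{lem:lexopt=sdopt} shows the lexicographic optimum satisfies $F[x](E_k)=r^x(E_k)$ for every $k$, and symmetrically $F[y](E_k)=r^y(E_k)$. Hence it suffices to prove $r^x(E_k)\ge r^y(E_k)$ for each $k$, and in fact I would prove the cleaner global statement $r^x(X)\ge r^y(X)$ for all $X\subseteq E$, which specializes to $X=E_k$.

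To establish this monotonicity of $r^{(\cdot)}$ in its superscript I would invoke the formula $r^x(X)=\min_{Y\subseteq X}\big(r(Y)+x(X\setminus Y)\big)$ supplied by Proposition~\ref{prop:polymatroid-reduction}. Because $x\ge y$ componentwise and all entries are nonnegative, for each fixed $Y\subseteq X$ we have $x(X\setminus Y)\ge y(X\setminus Y)$, so $r(Y)+x(X\setminus Y)\ge r(Y)+y(X\setminus Y)$. Taking the minimum over $Y\subseteq X$ preserves the inequality: if $Y^\ast$ attains the minimum on the right, then $r^x(X)\ge r(Y^\ast)+x(X\setminus Y^\ast)\ge r(Y^\ast)+y(X\setminus Y^\ast)=r^y(X)$. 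Applying this with $X=E_k$ and combining with the prefix-sum characterization of the previous paragraph yields $F[x](E_k)\ge F[y](E_k)$ for all $k$, which is exactly $F[x]\succeq^\sd F[y]$.

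I do not expect a serious obstacle once the identity $F[x](E_k)=r^x(E_k)$ is in hand; the entire argument rests on recognizing that $\succeq^\sd$ between the two choice outputs is nothing but the chain of prefix inequalities $r^x(E_k)\ge r^y(E_k)$, together with the evident monotonicity of the min-definition of $r^x$ in the vector $x$. The only point that genuinely deserves care is the bookkeeping at the boundary case of equality in every prefix, where one must confirm $F[x]=F[y]$ so that the weak dominance relation, which bundles strict dominance with equality, is certified correctly.
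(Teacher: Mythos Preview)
Your argument is correct in spirit but takes a longer road than the paper does, and it contains one small slip you should fix. In your justification that ``taking the minimum preserves the inequality,'' you let $Y^\ast$ attain the minimum \emph{on the $y$-side} and then write $r^x(X)\ge r(Y^\ast)+x(X\setminus Y^\ast)$; that inequality goes the wrong way, since $r^x(X)$ is a minimum and hence $\le$ any particular term. The clean version is to let $Y^\ast$ minimize the $x$-side, so that $r^x(X)=r(Y^\ast)+x(X\setminus Y^\ast)\ge r(Y^\ast)+y(X\setminus Y^\ast)\ge r^y(X)$. With this corrected, your proof goes through.

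The paper's proof, by contrast, avoids rank functions altogether and simply applies the \emph{statement} of Lemma~\ref{lem:lexopt=sdopt} rather than reaching inside its proof. The observation is that $x\ge y$ forces $\conv(\cF)^y\subseteq\conv(\cF)^x$, so $F[y]\in\conv(\cF)^x$; Lemma~\ref{lem:lexopt=sdopt} then gives $F[x]\succeq^\sd F[y]$ immediately. This is a two-line argument. What your route buys is a self-contained computation that makes the prefix sums $F[x](E_k)=r^x(E_k)$ explicit and isolates exactly where $x\ge y$ enters (the monotonicity of $r^{(\cdot)}$ in its superscript); what the paper's route buys is brevity and a cleaner logical dependency structure, since it uses Lemma~\ref{lem:lexopt=sdopt} as a black box rather than reproving a piece of it.
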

\begin{proof}
By Lemma~\ref{lem:lexopt=sdopt}, $F[x]\succeq^\sd z$ for every $z\in \conv(\cF)^x$. 
Because $F[y]\in \conv(\cF)^y$ and $x\geq y$ imply $F[y]\in \conv(\cF)^x$, we obtain $F[x]\succeq^\sd F[y]$.
\end{proof}

%in the matroid case
%$F_i[x]_e=\min\big\{r(\{e'\in E\mid e'\succeq_i %e\}),\,x_e+\sum_{e'\in E:\,e'\succ_i e}F_i[x]_{e'}\big\}$.

Recall that $P\coloneqq \conv(\cA)\subseteq \mathbb{R}^{N\times E}$ denotes the polytope corresponding to the set of feasible fractional assignments.
When $(E, \cF_i)$ is a matroid for every $i\in N$, the following properties are known to hold for $P$: % (they hold for any number of agents though we will consider only the case with two agents):
(i) $P$ is represented as
\begin{align}
    P=\left\{\pi\in\mathbb{R}^{N\times E} \mathrel{}:\mathrel{} 
    %\sum_{e\in X}x_{ie}\le r_i(X)~(\forall X\subseteq E,~\forall i\in N)
    {\begin{array}{l}
    \pi_i\in \conv(\cF_i) \quad(\forall i\in N),\\[1pt]
    \sum_{i\in N}\pi_{ie}\le 1 \quad(\forall e\in E)
    \end{array}}
    \right\} \label{eq:ra-polytope}
\end{align}
%where $\pi_i=(\pi_{ie})_{e\in E}$ 
\cite{Schrijver2003};
(ii) For a given feasible fractional assignment $\pi\in P$, we can compute in polynomial time a lottery assignment that induces $\pi$ \cite{GLS2012}.
Note that property (i) can be viewed as a generalization of the Birkhoff--von Neumann theorem.

Finally, we state a sufficient condition for a lottery assignment to be sd-envy-free on matroid constraints.
Let $F_i$ be the choice function induced from $(E,\cF_i)$ and $\succ_i$ for each $i\in N$.

\begin{proposition}\label{prop:ef}
Suppose that the constraints are matroid. For any lottery assignment $p\in\Delta(\cA)$, if $\pi_i^p\succeq^\sd_i F_i[\pi_j^p]$ for every $i,j\in N$ then $p$ is sd-envy-free.
\end{proposition}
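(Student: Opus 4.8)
The plan is to fix an arbitrary triple $i,j\in N$ and $e\in E$ and verify the inequality of Definition~\ref{def:sd-envy-free} directly. Writing $S\coloneqq U(\succ_i,e)$, the hypothesis $\pi_i^p\succeq_i^\sd F_i[\pi_j^p]$ gives $\sum_{e'\in S}\pi_{ie'}^p\ge \sum_{e'\in S}(F_i[\pi_j^p])_{e'}$, so it suffices to prove
\[\sum_{e'\in S}(F_i[\pi_j^p])_{e'}\ \ge\ \sum_{\bX\in\cX}p_{\bX}\max_{Y\subseteq X_j:\,Y\in\cF_i}|Y\cap S|.\]
This turns the statement into a comparison between the prefix mass that the choice function places on the top set $S$ and the expected amount of $S$ that agent $i$ could feasibly extract from agent $j$'s bundle.

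First I would rewrite both sides through the rank function $r_i$ of $(E,\cF_i)$. On the right, only elements of $S$ contribute to the objective, and by the hereditary property one may intersect any feasible $Y\subseteq X_j$ with $S$ without losing feasibility or decreasing $|Y\cap S|$; hence the optimal $Y$ can be taken inside $X_j\cap S$, giving $\max_{Y\subseteq X_j:\,Y\in\cF_i}|Y\cap S|=r_i(X_j\cap S)$. On the left, $S$ is a prefix set of the order $\succ_i$, so it coincides with one of the sets $E_k$ appearing in the proof of Lemma~\ref{lem:lexopt=sdopt}; that proof establishes the explicit identity $(F_i[\pi_j^p])(E_k)=r_i^{\pi_j^p}(E_k)$, where $r_i^{\pi_j^p}$ is the reduced polymatroid rank function of Proposition~\ref{prop:polymatroid-reduction}. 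Thus the target inequality becomes $r_i^{\pi_j^p}(S)\ge\sum_{\bX\in\cX}p_{\bX}\,r_i(X_j\cap S)$.

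The remaining and main step is this last inequality, where the matroid structure is essential. By definition $r_i^{\pi_j^p}(S)=\min_{Y\subseteq S}\big[r_i(Y)+\pi_j^p(S\setminus Y)\big]$, so I would fix an arbitrary $Y\subseteq S$ and bound the expected rank against $r_i(Y)+\pi_j^p(S\setminus Y)$. For each deterministic $\bX$, subadditivity and monotonicity of the matroid rank (together with $Y\subseteq S$) give
\[r_i(X_j\cap S)\le r_i(X_j\cap Y)+|X_j\cap(S\setminus Y)|\le r_i(Y)+|X_j\cap(S\setminus Y)|.\]
Taking the expectation over $\bX\sim p$ and using $\sum_{\bX\in\cX}p_{\bX}|X_j\cap(S\setminus Y)|=\sum_{e'\in S\setminus Y}\pi_{je'}^p=\pi_j^p(S\setminus Y)$ yields $\sum_{\bX\in\cX}p_{\bX}\,r_i(X_j\cap S)\le r_i(Y)+\pi_j^p(S\setminus Y)$. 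Since $Y\subseteq S$ was arbitrary, minimizing over $Y$ gives exactly $\sum_{\bX\in\cX}p_{\bX}\,r_i(X_j\cap S)\le r_i^{\pi_j^p}(S)$, which completes the argument.

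The conceptual obstacle is bridging the fractional object $F_i[\pi_j^p]$, defined by a lexicographic optimization over the polytope $\conv(\cF_i)^{\pi_j^p}$, with a combinatorial quantity averaged over the lottery. The reduced rank identity from Lemma~\ref{lem:lexopt=sdopt} and Proposition~\ref{prop:polymatroid-reduction} supplies the bridge, while the per-outcome subadditivity estimate is precisely what makes the expectation match the minimization defining $r_i^{\pi_j^p}$. The only other point needing care is the reduction $\max_{Y\subseteq X_j:\,Y\in\cF_i}|Y\cap S|=r_i(X_j\cap S)$, which leans on the hereditary/matroid structure of $\cF_i$ and would not survive for an arbitrary independence system.
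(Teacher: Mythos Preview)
Your argument is correct, but it takes a different route from the paper's proof. Both reduce the task to the inequality
\[
F_i[\pi_j^p](S)\ \ge\ \sum_{\bX\in\cX}p_{\bX}\max_{Y\subseteq X_j:\,Y\in\cF_i}|Y\cap S|,
\qquad S=U(\succ_i,e),
\]
and then diverge. You rewrite both sides through rank functions---the left as $r_i^{\pi_j^p}(S)$ via the greedy identity in the proof of Lemma~\ref{lem:lexopt=sdopt}, the right as $\sum_{\bX}p_{\bX}\,r_i(X_j\cap S)$ via the hereditary reduction---and close with a per-outcome subadditivity bound followed by minimization over~$Y$. The paper instead constructs, for each $e$, a single vector
\[
y^{(e)}=\sum_{\bX\in\cX}p_{\bX}\cdot \chi_{Y^*(\bX)},\qquad Y^*(\bX)\in\argmax_{Y\subseteq X_j:\,Y\in\cF_i}|Y\cap S|,
\]
observes that $y^{(e)}\in\conv(\cF_i)^{\pi_j^p}$ (it is a convex combination of feasible sets each contained in the corresponding $X_j$), and applies Lemma~\ref{lem:lexopt=sdopt} as a black box to get $F_i[\pi_j^p](S)\ge y^{(e)}(S)$, which equals the right-hand side by construction. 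The paper's route is shorter and uses Lemma~\ref{lem:lexopt=sdopt} only through its statement, whereas yours reaches inside that proof for the identity $F_i[\pi_j^p](E_k)=r_i^{\pi_j^p}(E_k)$; on the other hand, your version makes the dependence on the matroid rank axioms (monotonicity, subadditivity, $r_i(X)\le|X|$) completely explicit. One small remark: the reduction $\max_{Y\subseteq X_j:\,Y\in\cF_i}|Y\cap S|=r_i(X_j\cap S)$ needs only heredity, not the full matroid axioms---the matroid hypothesis enters through Proposition~\ref{prop:polymatroid-reduction} and the greedy identity, not here.
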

\begin{proof}
Apply Lemma~\ref{lem:lexopt=sdopt} with $x\coloneqq \pi^p_j$. Then, 
%the lexicographically maximum vector$F_i[\pi^p_j]$ satisfies 
$F_i[\pi^p_j]\succeq_i^\sd y$ holds for every $y\in \conv(\cF_i)^x$.
For each $e\in E$, define $i$'s fractional assignment
$y^{(e)}=\sum_{\bA\in\cA}p_{\bA}\cdot g(\bA,i,j,e)$, where $g(\bA,i,j,e)$ is the characteristic vector of the set $\argmax\{\,|Y\cap U({\succ_i}, e)|\mid  Y\subseteq A_j,\ Y\in \cF_i\,\}$.
Since $y^{(e)}\in \conv(\cF_i)^x$, we have
 $F_i[\pi^p_j]\succeq_i^\sd y^{(e)}$, and hence
%$F_i[\pi^p_j](U(\succ_i,e))\ge y^{(e)}(U(\succ_i,e))$, and hence
\begin{align*}
    F_i[\pi^p_j]&(U(\succ_i,e))\geq y^{(e)}(U(\succ_i,e))
    =\textstyle\sum_{\bA\in\cA}p_{\bA}\max_{Y\subseteq A_j:\,Y\in\cF_i}|Y\cap U(\succ_i, e)|.
\end{align*}
Thus, we can conclude that the statement holds. 
\end{proof}

By summarizing the discussion in this section, we conclude that an sd-envy-free and sd-efficient lottery assignment can be found by computing a feasible fractional assignment $p\in P$ that satisfies $\pi_i^p\succeq^\sd_i F_i[\pi_j^p]~(\forall i,j\in N)$ and $\nexists q\in P\setminus\{p\}$ such that $\pi_i^q\succeq^\sd_i\pi^p_i~(\forall i\in N)$,
when the constraints are matroid.

\section{Tractability Results}
Now, we provide our tractability results.

\subsection{Two agents with matroid constraints}
First, we provide a polynomial-time algorithm to find an sd-efficient and sd-envy-free lottery assignment for the case where there are two agents (i.e., $N=\{1,2\}$) and the constraints $\cF_1$ and $\cF_2$ are matroids. 
Note that due to Example~\ref{ex:PSisBad}, we need a different approach from generalizing the PS mechanism.
% the natural generalization of the PS mechanism does not work for this setting, as illustrated by Example~\ref{ex:PSisBad}.

Let $(w_{ie})_{i\in N, e\in E}$ be positive weights such that $w_{ia}>w_{ib}$ if and only if $a\succ_i b$ (e.g., $w_{ie}=|\{e'\in E\mid e\succeq_i e'\}|$ for each $i\in N$ and $e\in E$).
Then, a feasible fractional assignment $x\in P$ that maximizes $\sum_{i\in N}\sum_{e\in E}w_{ie}x_{ie}$ is sd-efficient. 
Indeed, if there exists a feasible fractional assignment $x'\in P\setminus\{x\}$ such that $x'_i\succeq_i^\sd x_i$, then we have $\sum_{i\in N}\sum_{e\in E}w_{ie}x'_{ie}>\sum_{i\in N}\sum_{e\in E}w_{ie}x_{ie}$, and hence $x$ does not attain the maximum weight.

%One may expect 
This may raise the expectation that a fractional assignment that satisfies both sd-efficiency and sd-envy-freeness can be found by computing a maximum weight feasible fractional assignment subject to an sd-envy-free constraint. If $\cF_1=\cF_2$, such an optimization problem is formulated as the following linear programming:
\begin{align}
    \begin{array}{rl}
         \max       &  \sum_{i\in\{1,2\}}\sum_{e\in E}w_{ie}x_{ie}\\[2mm]
         \text{s.t.}& x_1\succeq_1^\sd x_2,\\ 
                    & x_2\succeq_2^\sd x_1,\\ 
                    & x\in P.
    \end{array}\label{eq:lp}
\end{align}
However, 
% this is not the case.
the optimal solution for \eqref{eq:lp}
% for such a problem 
is not always sd-efficient.
To observe this, let us consider an instance where 
$E=\{e_1,e_2,e_3\}$,
$\cF_1=\cF_2=\{X\subseteq E\mid |X\cap \{e_1,e_2\}|\le 1\}$, 
%$\cF_1=\cF_2=\{\emptyset,\{e_1\},\{e_2\},\{e_3\},\{e_1,e_3\},\{e_2,e_3\}\}$, 
$e_3\succ_1 e_2\succ_1 e_1$, and $e_2\succ_2 e_1\succ_2 e_3$.
By setting $w_{ie}=|\{e'\in E\mid e\succeq_i e'\}|$ for each $i\in N$ and $e\in E$, the optimal solution of the linear programming \eqref{eq:lp} is $(x_1,x_2)=((0,0,1),(0,1,0))$. However, this is not sd-efficient because, for $(y_1,y_2)=((1,0,1),(0,1,0))\in P$, we have $y_1\succ_1^\sd x_1$ and $y_2=x_2$.

In our approach, the following notion will be useful.
\begin{definition}[sd-proportional]
A lottery assignment $p\in \Delta(\cA)$ is called \emph{sd-proportional} if 
%$\pi^p_i\succeq_i^\sd x$ holds for any $i\in N$ and $x\in \conv(\cF_i)$ with $x\leq \frac{1}{2}\cdot\bm{1}$, where $\bm{1}$ is the all-ones vector in $\mathbb{R}^E$.
$\pi^p_i\succeq_i^\sd x_i$ holds for any $i\in N$ and $x\in P$ with $x_i\leq \frac{1}{n}\cdot\bm{1}$, where $\bm{1}$ is the all-ones vector in $\mathbb{R}^E$.
\end{definition}
We can observe that sd-proportional lottery assignments must exist as follows. Consider a fractional assignment $\pi^*=(\pi^*_1, \pi^*_2)=(F_1[\frac{1}{2}\cdot\bm{1}],F_2[\frac{1}{2}\cdot\bm{1}])$, where $F_i$ 
is the choice function induced from $(E,\cF_i)$ and $\succ_i$ for each $i=1,2$. Then $\pi^*$ belongs to $P=\conv(\cA)$ as $P$ is represented by \eqref{eq:ra-polytope}. Then, there is a lottery assignment that induces $\pi^*$, while $\pi^*$ satisfies the condition for sd-proportionality by Lemma~\ref{lem:lexopt=sdopt}.  
%Indeed, any lottery assignment that induces $(F_1[\frac{1}{2}\cdot\bm{1}],F_2[\frac{1}{2}\cdot\bm{1}])\in P$ satisfies sd-proportionality.
We remark that the existence of sd-proportional lottery assignment does not hold for the general hereditary constraints case (Proposition~\ref{prop:no-prop}).

Furthermore, by using Lemma~\ref{lem:choice-sd}, we can observe that sd-proportionality is a stronger condition than sd-envy-freeness.
\begin{lemma}\label{lem:prop2ef}
If the number of agents is $2$ and the constraints are matroids, sd-proportionality implies sd-envy-freeness.
\end{lemma}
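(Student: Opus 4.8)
The plan is to reduce the claim to the matroid sd-dominance test of Proposition~\ref{prop:ef} and then exploit the two-agent budget constraint through a single averaging step. Writing $N=\{1,2\}$, Proposition~\ref{prop:ef} tells us that $p$ is sd-envy-free once we verify $\pi_i^p\succeq_i^\sd F_i[\pi_j^p]$ for all $i,j$. The diagonal cases $i=j$ are free, since $\pi_i^p\in\conv(\cF_i)$ forces $F_i[\pi_i^p]=\pi_i^p$, so the whole burden is the two cross conditions $\pi_1^p\succeq_1^\sd F_1[\pi_2^p]$ and, symmetrically, $\pi_2^p\succeq_2^\sd F_2[\pi_1^p]$. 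I would prove the first; the second is identical with the roles of $1$ and $2$ swapped.

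For the core step I would set $v\coloneqq F_1[\pi_2^p]$, so that $v\in\conv(\cF_1)$ and $v\le\pi_2^p$ by definition of the choice function. Here the two-agent structure enters: because the bundles in any deterministic assignment are disjoint, $\pi_{1e}^p+\pi_{2e}^p\le 1$ for every $e$, hence $v\le\pi_2^p\le\bm{1}-\pi_1^p$ and therefore $\pi_1^p+v\le\bm{1}$. This is exactly what makes the midpoint $w\coloneqq\tfrac{1}{2}(\pi_1^p+v)$ usable: as a convex combination of $\pi_1^p,v\in\conv(\cF_1)$ it lies in $\conv(\cF_1)$, and the bound $\pi_1^p+v\le\bm{1}$ gives $w\le\tfrac{1}{2}\cdot\bm{1}$. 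Applying sd-proportionality of $p$ to this admissible $w$ yields $\pi_1^p\succeq_1^\sd w$. Finally, since each cumulative sum $x\mapsto\sum_{e'\in U(\succ_1,e)}x_{e'}$ is linear, the inequality $\pi_1^p(U)\ge w(U)=\tfrac{1}{2}\bigl(\pi_1^p(U)+v(U)\bigr)$ for every $e$, where $U=U(\succ_1,e)$, rearranges to $\pi_1^p(U)\ge v(U)$, i.e.\ $\pi_1^p\succeq_1^\sd v=F_1[\pi_2^p]$, as required.

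The step I expect to be the crux — and the only place where \emph{two agents} is genuinely used — is passing from the proportionality guarantee (dominance over vectors capped at $\tfrac{1}{2}\cdot\bm{1}$) to dominance over the full envy vector $F_1[\pi_2^p]$. The trick is that the averaged vector $w$ lands below $\tfrac{1}{2}\cdot\bm{1}$ precisely because $\pi_1^p+\pi_2^p\le\bm{1}$; with three or more agents this per-coordinate budget no longer forces $\tfrac{1}{2}(\pi_1^p+v)\le\tfrac{1}{2}\cdot\bm{1}$, so the argument collapses, consistent with the fact that sd-envy-freeness can fail for three agents. A secondary point worth checking carefully is the appeal to Proposition~\ref{prop:ef}, which is what lets the per-realization, per-$e$ maximum $\max_{Y\subseteq X_2,\,Y\in\cF_1}|Y\cap U(\succ_1,e)|$ be summarized by the single choice vector $F_1[\pi_2^p]$; this summarization is a matroid phenomenon (via Lemma~\ref{lem:lexopt=sdopt}) and is where the hypothesis that the constraints are matroids is used.
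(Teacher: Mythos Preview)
Your proof is correct and follows essentially the same averaging argument as the paper: both form the midpoint of $\pi_i^p$ with the relevant choice vector, observe that it lies in $\conv(\cF_i)$ and is bounded by $\tfrac{1}{2}\cdot\bm{1}$ thanks to $\pi_1^p+\pi_2^p\le\bm{1}$, apply sd-proportionality to it, and rearrange. The only cosmetic differences are that the paper phrases the final step as a contradiction and works first with $F_i[\bm{1}-\pi_i^p]$ (then descends to $F_i[\pi_j^p]$ via Lemma~\ref{lem:choice-sd}) rather than directly with $F_i[\pi_j^p]$ as you do.
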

\begin{proof}
Let $x$ be a feasible fractional assignment that satisfies sd-proportionality, that is, $x_i\succeq^\sd_i F_i[\frac{1}{2}\cdot\bm{1}]$ for each $i=1,2$.
It is sufficient to prove that $x_i\succeq^\sd_i F_i[\bm{1}-x_i]$. 
In fact, this implies $x_i\succeq^\sd_i F_i[x_j]$ because 
$F_i[\bm{1}-x_i]\succeq^\sd_i F_i[x_j]$ by $\bm{1}-x_i\ge x_j$ and Lemma~\ref{lem:choice-sd} for $\{i,j\}=\{1,2\}$.

Suppose to the contrary that $x_{i}(U(\succ_i,e))<F_i[\bm{1}-x_i](U(\succ_i,e))$ for some $i\in\{1,2\}$ and $e\in E$.
Let $y_i=\frac{1}{2}(x_i+F_i[\bm{1}-x_i])$. Note that $F_i[y_i]=y_i$. % the fractional assignment such that $i$ receives $y_i$ and the other agent receives nothing is feasible. 
Because $2y_i=x_i+F_i[\bm{1}-x_i]\le\bm{1}$, we have $F_i[\frac{1}{2}\cdot\bm{1}]\succeq_i^\sd y_i$. Particularly, we have 
$F_i[\frac{1}{2}\cdot\bm{1}](U(\succ_i,e))\ge y_{i}(U(\succ_i,e))$.
Meanwhile, we have 
\begin{align}
    y_{i}(U(\succ_i,e))
    &=\textstyle\sum_{e':\,e'\succeq_i e}\frac{1}{2}(x_{ie'}+F_i[\bm{1}-x_i]_{e'})
    >x_{i}(U(\succ_i,e))
    \ge F_i[\tfrac{1}{2}\cdot \bm{1}](U(\succ_i,e)),
\end{align}
a contradiction.
\end{proof}

Consider the following linear programming, which is obtained from \eqref{eq:lp} by replacing sd-envy-freeness constraint with sd-envy-proportionality constraint:
\begin{align} 
    \begin{array}{rl}
         \max       &  \sum_{i\in\{1,2\}}\sum_{e\in E} w_{ie}x_{ie}\\[2mm]\label{eq:propLP}
         \text{s.t.} & x_i\succeq_i^\sd F_i[\tfrac{1}{2}\cdot\bm{1}] \qquad (i\in\{1,2\}),\\ 
                     & x\in P.
         %\text{s.t.} & x_1\succeq_1^\sd F_1[\tfrac{1}{2}\cdot\bm{1}],\ \  x_2\succeq_2^\sd F_2[\tfrac{1}{2}\cdot\bm{1}],\ \ x\in P.
    \end{array}
\end{align}
We can show that, unlike the case of \eqref{eq:lp}, the optimal solution for \eqref{eq:propLP} is sd-efficient. Furthermore, it is sd-proportional, and hence sd-envy-free by Lemma~\ref{lem:prop2ef}. Thus, we obtain the following theorem.

\begin{theorem}\label{thm:2-mat-het-het}
A lottery assignment that satisfies sd-efficiency and sd-envy-freeness always exists and can be computed in polynomial time if the number of agents is $2$ and the constraints are matroids. 
\end{theorem}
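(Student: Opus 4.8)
The plan is to combine the two tools already in place: sd-proportionality, which by Lemma~\ref{lem:prop2ef} forces sd-envy-freeness for two agents, and weight maximization, which forces sd-efficiency. The idea is to maximize a preference-consistent linear objective not over all of $P$ but over the subpolytope of sd-proportional assignments, and then to argue that this restriction costs nothing in terms of sd-efficiency, because sd-proportionality is preserved under sd-improvements.

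Concretely, I would first fix positive weights $(w_{ie})$ with $w_{ia}>w_{ib}\iff a\succ_i b$ and precompute the two fixed vectors $F_i[\tfrac{1}{2}\cdot\bm{1}]$ for $i=1,2$, which are obtainable in polynomial time by the greedy formula $y^*(e_k)=r^x(E_k)-r^x(E_{k-1})$ from the proof of Lemma~\ref{lem:lexopt=sdopt} applied with $x=\tfrac{1}{2}\cdot\bm{1}$. Since sd-proportionality of $x$ is, by Lemma~\ref{lem:lexopt=sdopt}, equivalent to $x_i\succeq_i^\sd F_i[\tfrac{1}{2}\cdot\bm{1}]$ for $i=1,2$, I define the polytope $Q\coloneqq\{x\in P\mid x_i\succeq_i^\sd F_i[\tfrac{1}{2}\cdot\bm{1}]~(i=1,2)\}$. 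Each dominance requirement is a finite family of linear cumulative-sum inequalities $\sum_{e'\in U(\succ_i,e)}x_{ie'}\ge F_i[\tfrac{1}{2}\cdot\bm{1}](U(\succ_i,e))$ compared against a \emph{precomputed constant}, so $Q$ is genuinely a polytope; it is nonempty because $(F_1[\tfrac{1}{2}\cdot\bm{1}],F_2[\tfrac{1}{2}\cdot\bm{1}])\in Q$ (each row lies in $\conv(\cF_i)$ and is at most $\tfrac{1}{2}\cdot\bm{1}$, so the two rows sum to at most $\bm{1}$ and the pair lies in $P$ by~\eqref{eq:ra-polytope}).

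I then take $x^*$ to be a maximizer of $\sum_{i}\sum_{e}w_{ie}x_{ie}$ over $Q$. Envy-freeness is immediate: $x^*\in Q$ is sd-proportional, hence sd-envy-free by Lemma~\ref{lem:prop2ef}. For sd-efficiency, suppose some feasible $x'\in P$ sd-dominates $x^*$, i.e.\ $x'_i\succeq_i^\sd x^*_i$ for both $i$ with strict dominance for at least one agent. By transitivity of the dominance relation (it is defined by cumulative-sum inequalities) and $x^*_i\succeq_i^\sd F_i[\tfrac{1}{2}\cdot\bm{1}]$, I get $x'_i\succeq_i^\sd F_i[\tfrac{1}{2}\cdot\bm{1}]$, so $x'\in Q$; but sd-domination together with the consistency of the weights yields $\sum_i\sum_e w_{ie}x'_{ie}>\sum_i\sum_e w_{ie}x^*_{ie}$ (exactly as argued above for maximizers over all of $P$), contradicting the maximality of $x^*$ over $Q$. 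Hence no feasible fractional assignment sd-dominates $x^*$, and $x^*$ is sd-efficient.

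The key conceptual step, and the place I expect to need care, is this upward-closedness of $Q$ under sd-improvements: it is what allows a \emph{restricted} optimum to remain globally sd-efficient, and it relies on writing sd-proportionality as comparison against the fixed vectors $F_i[\tfrac{1}{2}\cdot\bm{1}]$ rather than against a decision variable (which would make the constraints nonlinear). Everything else is routine. For the running time, the only nonlinear requirement to discharge is membership $x\in P$: by~\eqref{eq:ra-polytope}, $P$ is cut out by the coupling inequalities $x_{1e}+x_{2e}\le 1$ together with the inequalities $x_i(X)\le r(X)$ for the rank function $r$ of $(E,\cF_i)$, which admit a polynomial separation oracle via submodular (rank) minimization. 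Thus the objective can be maximized over $Q$ in polynomial time by the ellipsoid method, and the resulting $x^*$ is the desired feasible fractional assignment.
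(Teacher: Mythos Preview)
Your proposal is correct and essentially identical to the paper's own proof: both maximize a preference-consistent linear objective over the polytope $Q=\{x\in P\mid x_i\succeq_i^\sd F_i[\tfrac12\cdot\bm 1]~(i=1,2)\}$, invoke Lemma~\ref{lem:prop2ef} for sd-envy-freeness, and use the upward-closedness of $Q$ under sd-improvements to conclude sd-efficiency via the same contradiction. Your write-up is, if anything, more explicit about why $Q$ is a polytope, why it is nonempty, and why the restricted optimum is globally sd-efficient.
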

\begin{proof}
Consider the linear programming \eqref{eq:propLP}
with weights defined by $w_{ie}=|\{e'\in E\mid e\succeq_i e'\}|$ for each $i\in \{1,2\}$ and $e\in E$.
The ellipsoid method can solve the problem in polynomial time.
Let $x^*$ be an optimal solution to the problem.
By Lemma~\ref{lem:prop2ef}, $x^*$ is an sd-envy-free feasible fractional assignment.

Now, it suffices to prove that $x^*$ is sd-efficient.
To obtain a contradiction, suppose that $x^*$ is not sd-efficient. Namely, there exists a feasible fractional assignment $y~(\ne x^*)$ such that $y_i\succeq_i^\sd x^*_i$ for $i=1,2$.
Then, $y$ is a feasible solution of \eqref{eq:propLP}. Moreover, the objective value $\sum_{i\in\{1,2\}}\sum_{e\in E} w_{ie} y_{ie}$ is larger than $\sum_{i\in\{1,2\}}\sum_{e\in E}w_{ie}x^*_{ie}$. This contradicts the optimality of $x^*$.
\end{proof}

\subsection{Matroid constraints and identical preferences}
Next, we provide a polynomial-time algorithm to find an sd-efficient and sd-envy-free lottery assignment for the case where the preferences are identical, and the constraints are (heterogeneous) matroids.
Suppose that $E=\{e_1,e_2,\dots,e_m\}$ and 
the preference of each agent $i$ satisfies $e_1\succ_i e_2\succ_i \cdots \succ_i e_m$ without losing generality.
We use $\succ$ to represent $\succ_i$ for simplicity.

Recall that the natural generalization of the PS mechanism does not work for this setting, as shown in Example~\ref{ex:PSisBad}.
We generalize the mechanism in a different way. 
In our algorithm, we regard each item as a divisible item of probability shares and process the items one at a time. 
During the first round, each agent ``eats'' $e_1$ at the same speed while $e_1$ is not eaten up and is available for her.
Similarly, at the $k$th round, each agent eats $e_k$ at the same speed while it remains and is available for her.
Our algorithm is formally described in Algorithm~\ref{alg:psip}, where $\chi_{e_k}$ is the characteristic vector in $\{0,1\}^E$, i.e., its component corresponding to $e\in E$ is $1$ if $e=e_k$ and $0$ otherwise.
Note that Algorithm~\ref{alg:psip} can be implemented to run in polynomial time because $\epsilon_i$ at line 4 can be computed via submodular function minimization (for details, see the proof of Theorem~\ref{thm:n-mat-het-id}).

\begin{algorithm}
\SetInd{.5em}{.5em}
\caption{Heterogeneous matroid constraints and identical preferences}\label{alg:psip}
Let $\bm{x}\ot\bm{0}\in\mathbb{R}^{N\times E}$\;
\For{$k\ot 1,2,\dots,m$}{
\While{True}{
    Let $\epsilon_i\ot\max\{\epsilon\mid x_i+\epsilon\cdot\chi_{e_k}\in \conv(\cF_i)\}$ $(\forall i\in N$)\;\label{line:psip-sat}
    Let $N^+\ot\{i\in N\mid \epsilon_i>0\}$\;
    Let $s\ot \sum_{i\in N}x_{ie_k}$\;
    \lIf{$N^+=\emptyset$ or $s=1$}{\textbf{Break}}
    Let $\epsilon^*\ot \min\{\min_{i\in N^+}\epsilon_i,\, (1-s)/|N^+|\}$\;
    Update $x_i\ot x_i+\epsilon^*\cdot\chi_{e_k}$ for each $i\in N^+$\;
  }
}
\Return a lottery assignment $p\in\Delta(\cA)$ s.t.\ $\pi^p=\bm{x}$\;
\end{algorithm}

For the instance in Example~\ref{ex:PSisBad}, Algorithm~\ref{alg:psip} outputs an sd-efficient and sd-envy-free lottery assignment $p$ such that
\begin{align}
\pi^p = {
\begin{pNiceMatrix}[first-row,first-col]
& e_1 & e_2 & e_3 & e_4 & e_5 \\
1  & 1/2 & 1/2 & 1 & 1/2 & 0\\
2  & 1/2 & 1/2 & 0 & 1/2 & 1
\end{pNiceMatrix}}.
\end{align}

% For example, consider an instance $(N,E,(\succ_i, \cF_i)_{i\in N})$ where
% $N=\{1,2,3\}$, $E=\{e_1,e_2,e_3,e_4\}$, $e_1 \succ_i e_2 \succ_i e_3 \succ_i e_4~(\forall i\in N)$, and 
% \begin{align}
%     \cF_1&=\{X\subseteq E\mid |X|\le 1\},\\ 
%     \cF_2&=\{X\subseteq E\mid |X|\le 1~\text{and}~e_2\not\in X\},\\
%     \cF_3&=\big\{X\subseteq E\mid |X\cap\{e_3,e_4\}|\le 1\big\}.
% \end{align}
% Then, the outcome of our algorithm is
% \begin{align}
% %\small\setBAcolsep{2.5pt}
% \begin{blockarray}{r*{5}{>{\scriptstyle\color{gray}}c}}
% & e_1 & e_2 & e_3 & e_4 \\
% \begin{block}{>{\scriptstyle\color{gray}}r!{\,}(ccccc)}
% 1  & 1/3  &  1/2 & 1/6  & 0     \\[2pt]
% 2  & 1/3  &    0 & 5/12 & 1/4 \\[2pt]
% 3  & 1/3  &  1/2 & 5/12 & 3/4 \\
% \end{block}
% \end{blockarray}~.
% \end{align}

By using properties of matroids shown in Lemma~\ref{lem:choice-sd} and Proposition~\ref{prop:ef}, we can show the correctness of Algorithm~\ref{alg:psip} for an arbitrary number of agents.
\begin{theorem}\label{thm:n-mat-het-id}
An sd-efficient and sd-envy-free lottery assignment always exists and can be computed in polynomial time if the constraints are matroids, and the preferences are identical. 
\end{theorem}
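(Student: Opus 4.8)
The plan is to prove that the output $\bm{x}$ of Algorithm~\ref{alg:psip} is sd-efficient and sd-envy-free; feasibility ($\bm{x}\in P$, since each $x_i$ stays in $\conv(\cF_i)$ and column sums never exceed $1$) and the polynomial running time are already secured by the discussion preceding the algorithm, and an inducing $p\in\Delta(\cX)$ with $\pi^p=\bm{x}$ exists and is computable via the representation of $P$ in~\eqref{eq:ra-polytope} and its polynomial-time decomposition. The linchpin of both remaining arguments is a structural dichotomy for a single round: when round $k$ terminates, either $\sum_i x_{ie_k}=1$ (item $e_k$ is fully consumed) or every agent $i$ is \emph{saturated} at $e_k$, i.e.\ $\epsilon_i=0$, which yields a tight set $X\ni e_k$ with $x_i(X)=r_i(X)$. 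Since no agent touches $e_k$ after round $k$ and holds nothing on $E\setminus E_k$ during it (writing $E_k=\{e_1,\dots,e_k\}=U(\succ,e_k)$), such a tight set can be taken inside $E_k$ and remains tight for the final $\bm{x}$. A second ingredient is that, within a round, all agents still eating $e_k$ share a common consumption level, so any agent ending with strictly less of $e_k$ than another must have dropped out earlier, hence saturated.

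For sd-efficiency I would argue directly. Suppose toward a contradiction that some feasible $y\neq\bm{x}$ satisfies $y_i\succeq_i^\sd x_i$ for all $i$, and let $e_{k^*}$ be the first item (in the order $\succ$) at which $y$ and $\bm{x}$ differ. Because the two agree on $E_{k^*-1}$, the dominance inequality at $E_{k^*}$ collapses to $y_{ie_{k^*}}\ge x_{ie_{k^*}}$ for every $i$, with strict inequality for at least one agent $i_0$. Now invoke the round-$k^*$ dichotomy. If $e_{k^*}$ was fully consumed, then $\sum_i y_{ie_{k^*}}>\sum_i x_{ie_{k^*}}=1$, violating $\sum_i y_{ie_{k^*}}\le 1$. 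Otherwise $i_0$ is saturated, so there is a tight set $X\subseteq E_{k^*}$ with $e_{k^*}\in X$ and $x_{i_0}(X)=r_{i_0}(X)$; as $y=\bm{x}$ on $E_{k^*-1}\supseteq X\setminus\{e_{k^*}\}$, we get $y_{i_0}(X)=r_{i_0}(X)+(y_{i_0e_{k^*}}-x_{i_0e_{k^*}})>r_{i_0}(X)$, contradicting $y_{i_0}\in\conv(\cF_{i_0})$. Either way we reach a contradiction, so $\bm{x}$ is sd-efficient.

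For sd-envy-freeness I would apply Proposition~\ref{prop:ef}, reducing the goal to $x_i(E_k)\ge F_i[x_j](E_k)=\min_{Y\subseteq E_k}\{r_i(Y)+x_j(E_k\setminus Y)\}$ for all $i,j,k$ (recall $F_i[x_j](E_k)=r_i^{x_j}(E_k)$ by Lemma~\ref{lem:lexopt=sdopt}). It suffices to exhibit one set $Y$ witnessing the right-hand side. I would take $Y_0=\{e_l\in E_k : x_{ie_l}<x_{je_l}\}$: on each such $e_l$ agent $j$ ended above agent $i$, so by the common-level property $i$ saturated at $e_l$, giving a tight set $X_{i,l}\subseteq E_l$ containing $e_l$. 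Since tight sets of a matroid are closed under union (submodularity of $r_i$), the set $T=\bigcup_{e_l\in Y_0}X_{i,l}\subseteq E_k$ satisfies $x_i(T)=r_i(T)$ and $T\supseteq Y_0$. Using $x_i(T)=r_i(T)$ together with $x_{ie_l}\ge x_{je_l}$ on $E_k\setminus Y_0\supseteq E_k\setminus T$, we obtain
\[
x_i(E_k)=r_i(T)+x_i(E_k\setminus T)\ge r_i(T)+x_j(E_k\setminus T)\ge F_i[x_j](E_k),
\]
which is precisely the required inequality, and Proposition~\ref{prop:ef} then yields sd-envy-freeness.

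The step I expect to be most delicate is the envy-freeness analysis: it requires tracking the per-round eating dynamics carefully enough to certify that every ``deficit'' element of $i$ relative to $j$ lies in a tight set of $i$ drawn from the already-processed prefix, and then uncrossing these tight sets into a single tight $T$. Verifying that the mid-round tight sets persist as tight sets of the final $\bm{x}$ (because later rounds never add mass inside $E_k$) is the detail most easily overlooked. The sd-efficiency argument, by contrast, is short once the round dichotomy is available; notably, neither argument needs the aggregate assignment to maximize total welfare---and in general it does not---which is why the direct first-difference argument is preferable to any greedy or polymatroid optimality claim on the aggregate.
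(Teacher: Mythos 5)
Your proposal is correct, and its overall skeleton matches the paper's: both analyze Algorithm~\ref{alg:psip}, prove sd-efficiency by a first-difference argument (a dominating $y$ must agree with $\bm{x}$ on the prefix and strictly exceed it on the first differing item, which is impossible because that item was either exhausted or the agent was saturated), and reduce sd-envy-freeness to showing $x_i\succeq_i^\sd F_i[x_j]$ and invoking Proposition~\ref{prop:ef}. Where you genuinely diverge is in certifying $x_i(E_k)\ge F_i[x_j](E_k)$. The paper argues by contradiction: it introduces the auxiliary vector $z=F_i\big[(\max\{x_{ie},F_i[x_j]_e\})_{e\in E}\big]$, locates the first index $\ell$ at which $x_i$ falls behind $z$, and contradicts the eating dynamics (agent $j$ was still eating $e_\ell$ when $i$ stopped, yet $i$ could feasibly have continued toward $z$), leaning on the monotonicity of the choice function (Lemma~\ref{lem:choice-sd}). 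You instead produce a direct certificate: you collect the deficit set $Y_0=\{e_l\in E_k: x_{ie_l}<x_{je_l}\}$, note that each such element forces a tight set of $i$ inside the prefix $E_l$ (by the equal-speed eating rule, the fact that an agent leaves $N^+$ only by saturation, and the persistence of tight sets since later rounds add no mass on $E_l$), uncross these into a single tight $T\supseteq Y_0$, and plug $T$ into the min-formula $r_i^{x_j}(E_k)=\min_{Y\subseteq E_k}\{r_i(Y)+x_j(E_k\setminus Y)\}$, using the identity $F_i[x_j](E_k)=r_i^{x_j}(E_k)$ from the proof of Lemma~\ref{lem:lexopt=sdopt}. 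Both routes are sound; the paper's is shorter and avoids uncrossing, while yours makes the combinatorial structure explicit --- it exhibits the actual minimizer of the polymatroid rank formula and surfaces the ``tight sets persist across rounds'' step that the paper leaves implicit. Your sd-efficiency argument likewise spells out the exhausted-versus-saturated dichotomy that the paper compresses into the single sentence ``hence $i^*$ must have eaten more of $e_\ell$.''
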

\begin{proof}
We see that
$\epsilon_i=\max\{\epsilon\mid x_i+\epsilon\cdot\chi_{e_k}\in \conv(\cF_i)\}$ at line~\ref{line:psip-sat} of Algorithm~\ref{alg:psip} is equal to %$F_i[x_i+1\cdot\chi_{e_k}]_{e_k}$.
\begin{align}
\MoveEqLeft
\max\{\epsilon\mid x_i+\epsilon\cdot\chi_{e_k}\in \conv(\cF_i)\}\\
&=\max\{\epsilon\mid x_i(X)+\epsilon\le r_i(X)~(\forall X\subseteq E~\text{such that}~e_k\in X)\}\\
&=\min\{r_i(X)-x_i(X) \mid e_k\in X\subseteq E\}\\
&=\min\bigl\{r_i(Y\cup\{e_k\})-x_i(Y\cup\{e_k\}) \mid Y\subseteq E\setminus\{e_k\}\bigr\},
\end{align}
and the final minimization problem can be solved in polynomial time~\cite{cunningham1984}.
Hence, Algorithm~\ref{alg:psip} can be implemented to run in polynomial time.

It then suffices to prove that the outcome of Algorithm~\ref{alg:psip} satisfies sd-efficiency and sd-envy-freeness.
Let $\bm{x}$ be the fractional assignment obtained at the end of Algorithm~\ref{alg:psip}.
We prove sd-efficiency and sd-envy-freeness of $\bm{x}$ by contradiction.

\paragraph{sd-efficiency}
Suppose to the contrary that there exists a feasible fractional assignment $\bm{y}~(\ne \bm{x})$ such that $y_i\succeq_i^\sd x_i$ for all $i\in N$.
Let $\ell\in[m]$ be the smallest index that satisfies $y_{ie_\ell}>x_{ie_\ell}$ for some $i\in N$.
Let $i^*\in N$ be such an agent $i$ with smallest $x_{ie_\ell}$.
Then, we have $y_{ie_t}=x_{ie_t}$ for all $t\in [\ell-1]$ and $y_{je_\ell}\ge x_{je_\ell}$ for all $j\in N$.
Hence, $i^*$ must have eaten more of $e_\ell$, a contradiction.

\paragraph{sd-envy-freeness}
We will show $x_i \succeq^\sd_i F_i[x_j]$ for any $i,j \in N$. This and Proposition~\ref{prop:ef} imply sd-envy-freeness of $p$.
Suppose that there exists $i,j \in N$ and $e\in E$ such that $x_i(U({\succ},e))<F_i[x_j](U({\succ},e))$.
Let 
\begin{align}
z=F_i\big[(\max\{x_{ie},\,F_i[x_j]_e\})_{e\in E}\big]\in\conv(\cF_i).
\end{align}
Note that $x_i(U({\succ},e))<F_i[x_j](U({\succ},e))\le z(U({\succ},e))$ by Lemma~\ref{lem:choice-sd}.
Let $\ell\in[m]$ be the smallest index that satisfies $x_i(U({\succ},e_\ell))<z(U({\succ},e_\ell))$.
% $z_{e_t} = F_i[(\max\{x_{ie},F_i[x_j]_e\})_{e\in E}]_{e_t} \leq \max\{x_{ie_t},F_i[x_j]_{e_t}\}$ 
We have $x_{ie_t}=z_{e_t}$ for all $t\in[\ell-1]$ because, for each $t\in[\ell-1]$,
% Why this holds
$x_i(U({\succ},e_t))\ge z(U({\succ},e_t))$ by the choice of $\ell$ and 
$x_i(U({\succ},e_t))\le z(U({\succ},e_t))$ by Lemma~\ref{lem:choice-sd}.
Hence, $x_{ie_\ell}<z_{e_{\ell}}\le \max\{x_{ie_\ell},F_i[x_j]_{e_\ell}\} = F_i[x_j]_{e_\ell}\le x_{je_\ell}$.
Thus, when $i$ has eaten $x_{ie_\ell}$ units of $e_\ell$ in the algorithm, 
$e_\ell$ has not been eaten up, particularly $j$ is still eating $e_\ell$ because $x_{je_\ell}>x_{ie_\ell}$.  
Additionally, $i$ can eat $e_\ell$ more while preserving her feasibility because $x_{ie_t}=z_{e_t}$ for all $t\in[\ell-1]$, $x_{ie_\ell}<z_{e_\ell}$, and $z\in\conv(\cF_i)$.
This contradicts the behavior of the algorithm.
% Therefore, $x_i \succeq^\sd_i F_i[x_j]$ for any $i,j \in N$, and hence $p$ is sd-envy-free by Proposition~\ref{prop:ef}.
\end{proof}

\subsection{Identical constraints and preferences}
Finally, we provide the existence result when the constraints and preferences are identical.

\begin{theorem}\label{thm:n-gen-id-id}
An sd-efficient and sd-envy-free lottery assignment must exist for any instance with identical constraints and preferences. 
\end{theorem}
\begin{proof}
Let $\bA^*\in\cA$ be a deterministic assignment such that the set of assigned items $\bigcup_{i\in N}A^*_i$ is lexicographically maximum with respect to the common preference. Let $\bA^{(1)}\coloneqq \bA^*$, then, for each $i$ with $1<i\leq n$, let $\bA^{(i)}$ be obtained by shifting the bundles of $\bA^*$ by $i-1$, that is, $\bA^{(i)}\coloneqq (A^*_i,\ldots,A^*_n,A^*_1,\ldots,A^*_{i-1})$.
Define the lottery assignment $p\in\Delta(\cA)$ by setting $p_{\bA^{(i)}}=1/n$ for each $i\in [n]$. Then, $p$ is sd-envy-free because $\pi^p_1=\dots=\pi^p_n$. 
To observe that $p$ is sd-efficient, suppose to the contrary that there exists a lottery assignment $q\in\Delta(\cA)$ with $q\neq p$, such that $\pi_i^q \succeq^\sd \pi_i^p$ for all $i\in N$ with respect to the common preference. 
We define vectors $v^p,v^q\in \mathbb{R}^E$ by $v^p_e=\sum_{i\in N}\pi_{ie}^p$ and $v^q_e=\sum_{i\in N}\pi_{ie}^q$ for each $e\in E$. That is, $v^p_e$, $v^q_e$ are the probabilities that item $e$ is allocated to someone under $p$ and $q$, respectively.
Since $\pi_i^q \succeq^\sd \pi_i^p$ for all $i\in N$, we obtain $v^q \succeq^\sd v^p$. 
Because $v^p$ coincides with the characteristic vector of $\bigcup_{i\in N}A^*_i$, 
this implies that the support of $q$ contains a deterministic assignment $\bm{Y}=(Y_1, Y_2,\dots,Y_n)\in \cA$ such that $\bigcup_{i\in N}Y_i$ is lexicographically larger than $\bigcup_{i\in N}A^*_i$. This contradicts the choice of $\bA^*$.
\end{proof}

We note that an sd-efficient and sd-envy-free lottery assignment can be computed in polynomial time when the agents have identical matroid constraints and identical preferences.
In contrast, for general identical constraints, computing such a lottery assignment is NP-hard even if $n=2$ (see Theorem~\ref{thm:NPhard}).

\section{Impossibility Results}
In this section, we present impossibility results.

\subsection{Identical preferences}
We first consider the case where the preferences are identical.
We have demonstrated that, if the constraints are matroids, then an sd-efficient and sd-envy-free lottery assignment must exist (Theorem~\ref{thm:n-mat-het-id}). However, this is not true for general hereditary constraints.

\begin{theorem}\label{thm:2-gen-het-id}
An sd-efficient and sd-envy-free lottery assignment may not exist even with two agents, and the preferences are identical.
\end{theorem}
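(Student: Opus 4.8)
The plan is to prove impossibility by exhibiting an explicit two-agent instance and showing that its sd-envy-free assignments and its sd-efficient assignments are disjoint. I would take $E=\{a,b_1,b_2,b_3,b_4\}$ with the common preference $a\succ b_1\succ b_2\succ b_3\succ b_4$, make agent $2$ unconstrained ($\cF_2=2^E$), and give agent $1$ the hereditary family
$\cF_1=\{\emptyset,\{a\}\}\cup\{S\subseteq\{b_1,b_2,b_3,b_4\}\mid |S|\le 2\}$,
so that agent $1$ may take the single best item $a$ \emph{alone}, or at most two of the four inferior items, but never $a$ together with any $b_i$. This $\cF_1$ is not a matroid (the independent sets $\{a\}$ and $\{b_1,b_2\}$ violate the augmentation axiom), which is exactly the feature needed to leave the regime of Theorem~\ref{thm:n-mat-het-id}; the constraints are heterogeneous as required.

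The first step is to use efficiency to pin down the gross structure. Since $\cF_2=2^E$, any deterministic assignment that leaves some item unallocated is stochastically dominated, for agent $2$, by the assignment that hands that item to agent $2$ (agent $1$ being unaffected); hence it is not sd-efficient. As sd-efficiency implies ex post efficiency, an sd-efficient $p$ is supported on deterministic assignments allocating every item, so $X_2=E\setminus X_1$ almost surely and $\pi^p_{2}=\bm{1}-\pi^p_{1}$.

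The heart of the argument is to read off two conflicting envy requirements from Definition~\ref{def:sd-envy-free}. At the top item $a$, agent $1$'s best feasible sub-bundle of $X_2$ is $\{a\}$ whenever $a\in X_2$, so her constraint is $\pi^p_{1a}\ge\pi^p_{2a}=1-\pi^p_{1a}$, while agent $2$'s is $1-\pi^p_{1a}\ge\pi^p_{1a}$; together these force $\pi^p_{1a}=\tfrac12$. Because $\{a\}$ is the only member of $\cF_1$ containing $a$, agent $1$ then holds the singleton $\{a\}$ with probability exactly $\tfrac12$ and otherwise holds at most two items, whence $\pi^p_1(E)=\mathbb{E}\,|X_1|\le\tfrac12\cdot 1+\tfrac12\cdot 2=\tfrac32$. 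At the worst item $b_4$, however, the envy term for agent $1$ is $\mathbb{E}\big[\max_{Y\subseteq X_2:\,Y\in\cF_1}|Y|\big]$, and in \emph{every} realization $X_2=E\setminus X_1$ contains at least two of the $b_i$ (four exist and agent $1$ keeps at most two), so agent $1$ can always feasibly select two inferior items and this term equals $2$. Envy-freeness at $b_4$ thus demands $\pi^p_1(E)\ge 2$, contradicting $\pi^p_1(E)\le\tfrac32$; no $p$ is simultaneously sd-efficient and sd-envy-free.

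The main obstacle, and the step needing the most care, is moving from a lottery to manageable inequalities: the sd-envy-free right-hand side is a nonlinear functional of the whole lottery (a per-realization maximum over feasible sub-bundles), not a function of the marginals, so one cannot manipulate $\pi^p$ directly. What rescues tractability is (i) the reduction $X_2=E\setminus X_1$ from efficiency and (ii) the fact that for this $\cF_1$ the envy term is \emph{constant across realizations} at the two extreme levels, namely $\pi^p_{2a}$ at the top and $2$ at the bottom. I would also verify the count carefully: four inferior items are needed so that $X_2$ still retains two of them even when agent $1$ holds her maximal pair, which is what makes the bottom-level demand rigidly equal to $2$; with only three inferior items the two requirements become jointly satisfiable, so the size of the example is essential rather than incidental.
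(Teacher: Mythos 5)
Your proposal is correct and follows essentially the same route as the paper's proof: the paper uses the instance $E=\{e_1,e_2,e_3,e_4\}$ with $\cF_1=2^{\{e_1\}}\cup 2^{\{e_2\}}\cup 2^{\{e_3,e_4\}}$ and $\cF_2=2^E$, likewise forcing full allocation via the unconstrained agent, a $\tfrac12$--$\tfrac12$ split of the top item(s) via envy-freeness at the top of the order, and a contradiction at the bottom level where agent $1$'s expected bundle size falls short of the always-available feasible pair in agent $2$'s bundle. Your five-item instance is a minor variant (one top item plus four inferior ones instead of two top items plus a pair), but the counterexample structure and the two conflicting envy inequalities are the same.
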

\begin{proof}
Let $(N,E,(\succ_i, \cF_i)_{i\in N})$ be an instance where
$N=\{1,2\}$, $E=\{e_1,e_2,e_3,e_4\}$, $\cF_1=2^{\{e_1\}}\cup 2^{\{e_2\}}\cup 2^{\{e_3,e_4\}}$, $\cF_2=2^E$, and $e_1\succ_i e_2 \succ_i e_3 \succ_i e_4\ (i=1,2)$.
Then, we show that no lottery assignment in this instance satisfies sd-efficiency and sd-envy-freeness simultaneously.

Suppose to the contrary that $p\in\Delta(\cA)$ is an sd-efficient and sd-envy-free lottery assignment.
Because $\cF_2$ is $2^E$ and $p$ is sd-efficient, we have $p_{\bA}>0$ only if $\bA=(A_1,A_2)$ satisfies $A_1\cup A_2=E$.
Thus, $\pi^p_{1e}+\pi^p_{2e}=1$ for all $e\in E$.
By the sd-envy-freeness of $p$, we must have $\pi^p_{1e_1}=\pi^p_{2e_1}=1/2$, and hence $p_{(\{e_1\},\{e_2,e_3,e_4\})}=1/2$.
Additionally, $\pi^p_{1e_2}=\pi^p_{2e_2}=1/2$ by the sd-envy-freeness, and hence $p_{(\{e_2\},\{e_1,e_3,e_4\})}=1/2$.
Then, the condition \eqref{eq:sdEF} is violated for $i=1$, $j=2$, and $e=e_4$ because we have
\begin{align}
\sum_{\bA\in\cA}p_{\bA}|A_1\cap U(\succ_1, e_4)|
%=\pi^p_{1e_1}+\pi^p_{1e_2}=1<2
<\sum_{\bA\in\cA}p_{\bA}\max_{\substack{\scriptscriptstyle Y\subseteq A_2: \\\scriptscriptstyle Y\in\cF_1}}|Y\cap U(\succ_1, e_4)|,
\end{align}
where the left-hand side is $1$ and the right-hand side is $2$.
This contradicts sd-envy-freeness of $p$.
\end{proof}
%Note that this impossibility result holds even when the ``cardinal'' preferences are identical.

\subsection{Identical matroid constraints}
Next, we observe the case where the constraints are an identical matroid. 
\begin{theorem}\label{thm:3-mat-id-het}
An sd-efficient and sd-envy-free lottery assignment may not exist even with three agents and identical matroid constraints.
\end{theorem}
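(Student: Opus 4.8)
The plan is to exhibit a single explicit instance with three agents, identical matroid constraints, and suitably chosen (heterogeneous) preferences, and then argue by a chain of forced equalities that the sd-envy-free constraints, combined with sd-efficiency, lead to a contradiction. This mirrors the structure of the proof of Theorem~\ref{thm:2-gen-het-id}, but the construction must be more delicate: since the constraint is now a genuine matroid (and Proposition~\ref{prop:ef} together with the Birkhoff--von Neumann-type representation \eqref{eq:ra-polytope} are available), I cannot reuse the independence-system pathology from the footnote after Lemma~\ref{lem:lexopt=sdopt}. The natural candidate is a small \emph{partition matroid}, consistent with the paper's emphasis that impossibility holds even in that case; a plausible choice is $E$ partitioned into a few categories with unit capacities, so that each agent wants exactly one item per category.

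First I would fix the instance: choose $E$ (say $|E|=3$ or $4$), a partition matroid $\cF_1=\dots=\cF_3=\cF$ with capacity vector making $(E,\cF)$ a matroid, and three distinct strict orders $\succ_1,\succ_2,\succ_3$ that are cyclically arranged so that no symmetric (uniform) assignment can be simultaneously efficient. The key design principle is a \emph{tension between symmetry and efficiency}: sd-envy-freeness with identical constraints reduces (by Definition~\ref{def:sd-envy-free}) to $\pi^p_i\succeq^\sd_i\pi^p_j$ for all $i,j$, which forces the three marginal rows to be ``balanced'' against each agent's own order; meanwhile sd-efficiency forces the total consumption vector $\sum_i\pi^p_i$ to put as much mass as possible on the best items. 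With three cyclically shifted preferences, I expect the balancing constraints to push the solution toward the uniform assignment $\pi^p_1=\pi^p_2=\pi^p_3$, while efficiency rules that uniform assignment out because the aggregate mass could be reallocated to strictly dominate it for every agent.

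The concrete argument would proceed by deriving forced values: from sd-efficiency conclude that each item is fully consumed (total marginal $1$ on every item that some agent ranks high enough), as in the Theorem~\ref{thm:2-gen-het-id} proof; then apply the pairwise sd-dominance inequalities $\sum_{e'\succeq_i e}\pi^p_{ie'}\ge\sum_{e'\succeq_i e}\pi^p_{je'}$ for the top items of each agent to pin down several entries to $1/3$ or $1/2$; and finally exhibit one item $e$ and one agent $i$ for which the upper-set sum $\sum_{e'\in U(\succ_i,e)}\pi^p_{ie'}$ is forced strictly below what sd-envy-freeness demands relative to some other agent $j$. Concretely I would look for an item that agent $i$ values but that, under the efficient allocation, ends up concentrated on the agents who rank it highest, creating an unavoidable deficit for $i$.

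\textbf{The main obstacle} is designing the instance so that the impossibility is genuinely forced rather than circumventable by some asymmetric lottery: with a matroid I must ensure there is no ``escape'' assignment exploiting the extra flexibility that $\conv(\cF)$ provides over a naive partition, and I must verify efficiency is truly incompatible with \emph{every} envy-free lottery, not merely the uniform one. I expect the cleanest route is to first reduce, via sd-efficiency, the feasible region to assignments with full item consumption, then parametrize the remaining degrees of freedom by a few variables, and show the envy-freeness inequalities form an infeasible system. Getting the capacities and the three orders to interlock so that this system is provably inconsistent — ideally by a short counting argument on a single critical upper set, as in the final displayed inequality of Theorem~\ref{thm:2-gen-het-id} — is the delicate part; once the instance is right, the contradiction itself should be a short chain of inequalities.
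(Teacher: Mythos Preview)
Your outline has the right overall shape---exhibit an instance, assume an sd-efficient and sd-envy-free $\pi$ exists, derive a contradiction---but two genuine gaps would block it as written.

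First, the instance design you sketch will not work. With identical matroid constraints and cyclically shifted preferences on $|E|=3$ or $4$ items, the uniform row $\pi_1=\pi_2=\pi_3$ is typically both sd-envy-free and sd-efficient (compare Example~\ref{ex:simple} and the construction in Theorem~\ref{thm:n-gen-id-id}); your heuristic that ``efficiency rules the uniform assignment out'' fails precisely because a matroid polytope is well-behaved. The paper uses $|E|=5$ with $\cF=\{X\mid |X\cap\{a,b,c\}|\le 1\}$, and the preferences are deliberately \emph{not} cyclic: agents $1$ and $2$ both rank $d$ first, while agent $3$ ranks $a$ above $d$. This asymmetry is essential, not incidental.

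Second, and more importantly, your proposed mechanism of contradiction---use sd-efficiency once to force full consumption, then show the envy-freeness inequalities form an infeasible linear system---does not close. In the paper's instance, after sd-efficiency gives each agent one unit of $\{a,b,c\}$ and $2/3$ of $\{d,e\}$, the envy-freeness constraints still leave three free parameters $(\alpha,\beta,\gamma)=(\pi_{1d},\pi_{3a},\pi_{2c})$, and the resulting system is \emph{feasible} on a nontrivial region. The actual contradiction requires invoking sd-efficiency a \emph{second} time, via explicit Pareto-improving perturbations $\pi'=\pi+\epsilon M'$ and $\pi''=\pi+\epsilon M''$: since $\pi$ is efficient, each perturbation must leave $P$, which forces one of a few entries of $\pi$ to hit $0$ or $1$; a four-case analysis on which entries are tight then collides with the bounds $\alpha\le 1/2$, $\beta\le 3/4$, $\gamma\ge 1/3$ already derived from feasibility and envy-freeness. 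This perturbation step is the missing idea in your plan, and without it no contradiction emerges.
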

\begin{proof}
Let $(N,E,(\succ_i, \cF_i)_{i\in N})$ be an instance where
$N=\{1,2,3\}$, $E=\{a,b,c,d,e\}$, $\cF_i=\{X\subseteq E\mid |X\cap\{a,b,c\}|\le 1\}$ for all $i\in N$, and 
\begin{align}
    &d\succ_1 a \succ_1 b \succ_1 c \succ_1 e,\\
    &d\succ_2 b \succ_2 e \succ_2 a \succ_2 c,\\
    &a\succ_3 d \succ_3 e \succ_3 b \succ_3 c.
\end{align}
We prove that no lottery assignment in this instance satisfies sd-efficiency and sd-envy-freeness simultaneously.

To obtain a contradiction, suppose that an sd-efficient and sd-envy-free lottery assignment induces $\pi\in P$.
Then, by sd-efficiency, each agent receives a unit amount of item $\{a,b,c\}$ and $2/3$ amount of $\{d,e\}$.
Let $\pi_{1d}=\alpha$, $\pi_{3a}=\beta$, and $\pi_{2c}=\gamma$.
By sd-envy-freeness, the fractional assignment $\pi$ can be written as follows:
\[
{\begin{pNiceMatrix}[first-row,first-col]
& a & b & c & d & e\\
1  & 1{-}3\alpha{+}\beta & {-}1{+}3\alpha{-}\beta{+}2\gamma & 1{-}2\gamma & \alpha    &  \frac{2}{3}-\alpha\\[2pt]
2  & 3\alpha{-}2\beta  &  1{-}3\alpha{+}2\beta{-}\gamma & \gamma    & \alpha    &  \frac{2}{3}{-}\alpha\\[2pt]
3  & \beta           &  1{-}\beta{-}\gamma           & \gamma    & 1{-}2\alpha & {-}\frac{1}{3}{+}2\alpha\\
\end{pNiceMatrix}}.
\]
As $\pi$ is a feasible fractional assignment, we have $\pi_{3d}=1-2\alpha\ge 0$ and $\pi_{2a}=3\alpha-2\beta\ge 0$.
Thus, we obtain
\begin{align}
\alpha\le 1/2 \quad\text{and}\quad \beta\le 3\alpha/2\le 3/4. \label{eq:alpha_beta_upper}
\end{align}
Moreover, we have $\alpha+2\gamma=\pi_{1d}+\pi_{1a}+\pi_{1b}\ge \pi_{2d}+\pi_{2a}+\pi_{2b}=1+\alpha-\gamma$ by sd-envy-freeness; therefore,
\begin{align}
\gamma\ge 1/3. \label{eq:gamma_lower}
\end{align}
For a sufficiently small positive $\epsilon$, let 
\begin{align}
\pi' &\coloneqq ~\pi +
{\begin{pNiceMatrix}[first-row,first-col]
& a & b & c & d & e\\
1  & \epsilon  & -\epsilon & 0 & 0 & 0\\
2  & -\epsilon &  \epsilon & 0 & 0 & 0\\
3  & 0         & 0         & 0 & 0 & 0\\
\end{pNiceMatrix}}
\quad\text{and}\quad%\\
\pi'' \coloneqq ~\pi + 
{\begin{pNiceMatrix}[first-row,first-col]
& a & b & c & d & e\\
1  & -\epsilon &  0 & \epsilon  & \epsilon  &  -\epsilon\\
2  & 0         &  0 & 0         & 0         &  0\\
3  & \epsilon  &  0 & -\epsilon & -\epsilon & \epsilon\\
\end{pNiceMatrix}}.
\end{align}
Because $\pi'$ and $\pi''$ improve $\pi$, they must be infeasible.
By the infeasibility of $\pi'$, we have (i) $\pi_{1b}=-1+3\alpha-\beta+2\gamma=0$ or (ii) $\pi_{2a}=3\alpha-2\beta=0$ (because $\pi_{1b}>0$ implies $\pi_{2b}<1$ and $\pi_{2a}>0$ implies $\pi_{1a}<1$).
Additionally, by the infeasibility of $\pi''$, we have (iii) $\pi_{1a}=1-3\alpha+\beta=0$ or (iv) $\pi_{3d}=1-2\alpha=0$ 
(because $\pi_{1c}=1-2\gamma\le 1/3<1$, $\pi_{1d}=\alpha\le 1/2<1$, $\pi_{1e}=2/3-\alpha\ge 1/6>0$, $\pi_{3a}=\beta\le 3/4<1$, $\pi_{3c}=\gamma\ge 1/3>0$, and $\pi_{3e}=-1/3+2\alpha\le 2/3<1$ from \eqref{eq:alpha_beta_upper} and \eqref{eq:gamma_lower}).
We consider four possible cases separately.

\medskip
\noindent\textbf{Case 1} (i) $-1+3\alpha-\beta+2\gamma=0$ and (iii) $1-3\alpha+\beta=0$.
Then, $\gamma=0$, which contradicts $\gamma\ge 1/3$ from \eqref{eq:gamma_lower}.

\medskip
\noindent\textbf{Case 2} (i) $-1+3\alpha-\beta+2\gamma=0$ and (iv) $1-2\alpha=0$.
Then, $\gamma=(1-3\alpha+\beta)/2=(-1/2+\beta)/2\ge 1/3$ from \eqref{eq:gamma_lower}.
This implies $\beta\ge 7/6$, which contradicts $\beta=\pi_{3a}\le 1$.

\medskip
\noindent\textbf{Case 3} (ii) $3\alpha-2\beta=0$ and (iii) $1-3\alpha+\beta=0$.
Then, $\alpha=2/3$, which contradicts $\alpha\le 1/2$ from \eqref{eq:alpha_beta_upper}.

\smallskip
\noindent\textbf{Case 4} (ii) $3\alpha-2\beta=0$ and (iv) $1-2\alpha=0$.
Then, $\alpha=1/2$ and $\beta=3/4$.
Hence, $\pi_{3b}=1/4-\gamma\ge 0$, which contradicts $\gamma\ge 1/3$ from \eqref{eq:gamma_lower}.
\smallskip

Thus, no fractional assignment in the instance satisfies sd-efficiency and sd-envy-freeness simultaneously.
\end{proof}

We can also demonstrate that, for any $n\ge 3$, there exists an instance that has no sd-efficency and sd-envy-freeness lottery assignment.
Consider an instance $(N,E,(\succ_i,\cF_i)_{i\in N})$ where $N=\{1,2,\dots,n\}$, $E=\{a,b,c,d,e,o_6,\dots,o_{2n}\}$, $\cF_i=\{X\subseteq E\mid |X\cap\{a,b,c\}|\le 1\}$ for all $i\in N$, and 
\begin{align}
    &d\succ_1 a \succ_1 b \succ_1 c \succ_1 e \succ_1 o_6\succ_1\cdots,\\
    &d\succ_2 b \succ_2 e \succ_2 a \succ_2 c \succ_2 o_6\succ_2\cdots,\\
    &a\succ_3 d \succ_3 e \succ_3 b \succ_3 c \succ_3 o_6\succ_3\cdots,\\
    &o_{2i-1} \succ_i o_{2i} \succ_i \cdots \quad(i=4,5,\dots,n).
\end{align}
Then, analysis similar to that in the proof of Theorem~\ref{thm:3-mat-id-het} demonstrates that this instance does not have an sd-efficient and sd-envy-free lottery assignment.

\section{Computational Results}
This section discusses computational issues related to finding an sd-efficient and sd-envy-free lottery assignment.
Of course, finding such a lottery assignment is NP-hard, even for a simple case.
It is worth noting that this problem remains challenging even with exponential time, given the intricate nature of sd-efficiency representation.
Nevertheless, we provide an exponential-time algorithm to solve it. 
%We first prove that finding such a lottery assignment is NP-hard even when the agents are identical.
%We then provide an exponential time algorithm to find it.

\subsection{Computational hardness}
Throughout this subsection, we assume that the input of each constraint is given as a system of linear inequalities.
We first prove that checking the sd-efficiency of a given assignment is NP-hard even when there is only one agent.
\begin{theorem}\label{thm:checkHard}
Even when there is only one agent, checking the sd-efficiency of a given assignment is coNP-hard.
\end{theorem}
\begin{proof}
We provide a reduction from the \emph{3-dimensional matching (3DM)} problem, which is known to be NP-hard~\cite{GJ1979}. 
An instance of the problem consists of $3$ sets $X=\{x_1,x_2,\dots,x_n\}$, $Y=\{y_1,y_2,\dots,y_n\}$, $Z=\{z_1,z_2,\dots,z_n\}$ and a subset $T=\{t_1,t_2,\dots,t_m\}$ of $X\times Y\times Z$.
A subset $M\subseteq T$ is called a $3$-dimensional matching if the following holds: for any two distinct triples $t=(x,y,z)\in M$ and $t'=(x',y',z')\in M$, we have $x\ne x'$, $y\ne y'$, and $z\ne z'$.
Our task is to decide whether a $3$-dimensional matching of size $n$ exists.
Given an instance of the 3DM problem $(X,Y,Z;T)$, we construct an instance of our assignment problem $(N,E,(\succ_i,\cF_i)_{i\in N})$ as follows:
\begin{itemize}
\item $N=\{1\}$,
\item $E=T\cup S$ where $S=\{s_1,s_2,\dots,s_n\}$,
\item $t_1\succ_1\dots\succ_1 t_m\succ_1 s_1\succ_1\dots\succ_1 s_n$, and
\item $\cF_1=\{M\subseteq T\mid M\text{ is a $3$-dimensional matching}\}\cup 2^S$.
\end{itemize}
Then, it is not difficult to see that the deterministic assignment where agent $1$ receives $S$ is sd-efficient if and only if the 3DM instance $(X,Y,Z;T)$ is a No-instance.
Therefore, checking the sd-efficiency of a given assignment is coNP-hard.
\end{proof}
It should be noted that, if there is only one agent, an sd-efficient and sd-envy-free lottery assignment must exist and can be computed.
Indeed, one such assignment is to assign the lexicographic maximum set to the agent with probability $1$, which can be computed in polynomial time.
In addition, when there are multiple agents but have identical constraints, verifying the sd-envy-freeness of a given lottery assignment $p$ can be done efficiently by checking $\pi^p_i\succeq_i^\sd \pi_j^p$ for all pairs of agents $i$ and $j$ in $N$.

Next, we demonstrate that finding an sd-efficient and sd-envy-free lottery assignment is computationally hard even in the case where there are two agents with identical preferences and constraints.
\begin{theorem}\label{thm:NPhard}
Even when there are two agents with identical preferences and constraints, finding an sd-efficient and sd-envy-free lottery assignment is NP-hard.
\end{theorem}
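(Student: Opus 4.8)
The plan is to give a polynomial-time Turing reduction from an NP-complete combinatorial problem to the search problem, exploiting the fact that with two agents and identical preferences the two requirements pin down the outcome almost completely. First I would record a structural observation. When $\cF_1=\cF_2=\cF$ and $\succ_1=\succ_2=\succ$, sd-envy-freeness is equivalent to $\pi^p_1\succeq^\sd\pi^p_2$ and $\pi^p_2\succeq^\sd\pi^p_1$; since $\succ^\sd$ is a strict partial order, these two relations force $\pi^p_1=\pi^p_2$. Writing $v\coloneqq\pi^p_1+\pi^p_2=2\pi^p_1$ for the vector of total allocation probabilities (note $v_e=\Pr[e\in X_1\cup X_2]\le 1$ for every $e$ because the bundles are disjoint), I would then argue that sd-efficiency forces $v$ to be stochastically maximal among all achievable total-allocation vectors: if some achievable $v'\succ^\sd v$ existed, I could realize $v'$ by a \emph{symmetric} lottery (average any realizing lottery with its agent-swapped copy, which is legitimate since the constraints are identical), obtaining $\pi_1=\pi_2=v'/2$; then $v'/2\succ^\sd v/2=\pi^p_i$ improves both agents, contradicting sd-efficiency.

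The crucial special case is $v=\bm 1$. I claim $\bm 1$ is achievable if and only if the ground set $E$ can be written as a disjoint union $E=A\sqcup B$ with $A,B\in\cF$: since each item lies in at most one bundle, $v_e=1$ for all $e$ can hold only if \emph{every} deterministic assignment in the support covers $E$, so achievability of $\bm 1$ is equivalent to the existence of a single such partition. Because $\bm 1\succ^\sd v$ for every $v\le\bm 1$ with $v\ne\bm 1$ (the partial sum over all of $E$ is then strictly smaller), the maximality established above yields a clean dichotomy: if $E$ admits such a partition then every sd-efficient and sd-envy-free assignment has total $\bm 1$, and otherwise every such assignment has total $v\ne\bm 1$, equivalently $\sum_e v_e<|E|$. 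Existence of an sd-efficient and sd-envy-free assignment in both cases is guaranteed by Theorem~\ref{thm:n-gen-id-id}, so the hypothetical algorithm always returns something whose total I can inspect.

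With this dichotomy in hand I would reduce from the NP-complete problem of \emph{hypergraph $2$-colorability} (Property B): given a $3$-uniform hypergraph $H=(V,\mathcal H)$, decide whether $V$ admits a $2$-coloring in which no hyperedge is monochromatic. Set $E\coloneqq V$, let $\succ$ be any fixed total order on $E$, and define the common hereditary family $\cF\coloneqq\{X\subseteq E\mid X\text{ contains no hyperedge of }H\}$; this is a nonempty independence system with a polynomial-time membership oracle (scan the hyperedges), so the constructed instance is legitimate. A partition $E=A\sqcup B$ with $A,B\in\cF$ is precisely a proper $2$-coloring of $H$, so by the dichotomy the instance has an sd-efficient and sd-envy-free assignment with total $\bm 1$ if and only if $H$ is $2$-colorable. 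Hence a polynomial-time algorithm producing any sd-efficient and sd-envy-free assignment would let me compute $v=\pi_1+\pi_2$ and test whether $v=\bm 1$, thereby deciding Property B, which establishes NP-hardness. The step I expect to demand the most care is this structural characterization itself — rigorously proving that sd-envy-freeness forces equal marginals and that sd-efficiency forces the total to be stochastically maximal (hence $\bm 1$ exactly when a $2$-coloring exists), including the symmetrization step and the lottery-realizability argument for $v=\bm 1$ — since once the total vector is shown to encode colorability, the gadget and the final read-off are routine.
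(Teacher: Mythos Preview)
Your argument is correct and rests on the same structural observation as the paper's proof: with two identical agents, sd-envy-freeness forces $\pi^p_1=\pi^p_2$, and sd-efficiency then forces the total vector $v=2\pi^p_1$ to equal $\bm 1$ exactly when $E$ can be split into two feasible sets, so an oracle for the search problem decides that partition question. The only difference is the NP-hard source: the paper reduces from \textsc{Partition} by taking $\cF=\{X\subseteq E\mid\sum_{e_i\in X}a_i\le\tfrac12\sum_i a_i\}$, whereas you reduce from hypergraph $2$-colorability with $\cF$ the independent sets of the hypergraph. In both encodings ``$E=A\sqcup B$ with $A,B\in\cF$'' is equivalent to a Yes-instance and the membership oracle is polynomial, so the reductions are interchangeable; your write-up is more explicit about the symmetrization step and the sd-maximality of $\bm 1$, which the paper leaves implicit.
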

\begin{proof}
We reduce from the \emph{PARTITION} problem, which is known to be NP-hard~\cite{GJ1979}. 
An instance of the problem consists of $k$ integers $a_1,a_2,\dots,a_k$, and it is a Yes-instance if and only if there exists $I\subseteq [k]$ such that $\sum_{i\in I}a_i=\sum_{i\in [k]\setminus I}a_i$.
Given an instance of the PARTITION problem $(a_1,a_2,\dots,a_n)$, we construct an instance of our assignment problem $(N,E,(\succ_i,\cF_i)_{i\in N})$ as follows:
\begin{itemize}
\item $N=\{1,2\}$, 
\item $E=\{e_1,e_2,\dots,e_k\}$, 
\item $e_1\succ_ie_2\succ_i\dots\succ_i e_k~(i\in \{1,2\})$, and 
\item $\cF_1=\cF_2=\{E'\subseteq E\mid \sum_{e_i\in E'}a_i\le \sum_{i\in[k]}a_i/2\}$.
\end{itemize}
If the instance of the PARTITION problem is a Yes-instance, an sd-efficient and sd-envy-free lottery assignment $p\in\Delta(\cA)$ must satisfy $\pi^p_1=\pi^p_2=\frac{1}{2}\cdot\bm{1}$.
On the other hand, if the instance of the PARTITION problem is a No-instance, any feasible assignment $p\in\Delta(\cA)$ cannot satisfy $\pi^p_1=\pi^p_2=\frac{1}{2}\cdot\bm{1}$.
Hence, finding an sd-efficient and sd-envy-free lottery assignment is NP-hard.
\end{proof}

\subsection{An exponential-time algorithm}
Here, we provide an algorithm that finds an sd-efficient and sd-envy-free lottery assignment in an exponential time with respect to the numbers of agents and items.
Let $(N,E,(\succ_i,\cF_i)_{i\in N})$ be an instance of our assignment problem, where $N=\{1,2,\dots,n\}$ and $E=\{e_1,e_2,\dots,e_m\}$. Define $\cA$ as the set of all deterministic assignments.
As a preparation for constructing the algorithm, we provide a characterization of the support of sd-efficient lottery assignments.
\begin{lemma}\label{lem:efficient_base}
If a lottery assignment $p\in\Delta(\cA)$ is sd-efficient, then every lottery assignment $q\in\Delta(\cA)$ such that $q_{\bA}=0$ for all $\bA\in\cA$ with $p_{\bA}=0$ is also sd-efficient.
\end{lemma}
\begin{proof}
Let $p$ and $q$ be lottery assignments such that $\{\bA\in\cA\mid p_{\bA}>0\}\supseteq\{\bA\in\cA\mid q_{\bA}>0\}$. 
Suppose to the contrary that $p$ is sd-efficient, but $q$ is not sd-efficient. 
Then, there exists a lottery assignment $q'\in\Delta(\cA)$ such that $\pi_i^{q'}\succeq_i^\sd \pi_i^q$ for every $i\in N$ and $\pi_i^{q'}\succ_i^\sd \pi_i^q$ for some $i\in N$.
Let $\epsilon$ be a positive real such that $\epsilon<p_{\bA}$ for all $\bA\in\cA$ with $p_{\bA}>0$. % $\epsilon<1-p_{\bA}$ for all $\bA\in\cA$
We construct a lottery assignment $p'\in\Delta(\cA)$ by setting $p'_{\bA}=p_{\bA}+\epsilon\cdot(q'_{\bA}-q_{\bA})$ for all $\bA\in\cA$.
Here, $p'$ is indeed a lottery assignment because 
$\sum_{\bA\in\cA}p'_{\bA}=\sum_{\bA\in\cA}[p_{\bA}+\epsilon\cdot(q'_{\bA}-q_{\bA})]=1+\epsilon(1-1)=1$, 
$p'_{\bA}\ge p_{\bA}-\epsilon>0$ for each $\bA\in\cA$ with $p_{\bA}>0$, and 
$p'_{\bA}=\epsilon\cdot q'_{\bA}\ge 0$ for each $\bA\in\cA$ with $p_{\bA}=0$ by $q_{\bA}=0$.
Then, we have $\pi_i^{p'}\succeq_i^\sd \pi_i^p$ for every $i\in N$ and $\pi_i^{p'}\succ_i^\sd \pi_i^p$ for some $i\in N$.
This means that $p$ is not sd-efficient, leading to a contradiction.
\end{proof}

For a subset of deterministic assignments $\cA' \subseteq \cA$, let $\Delta(\cA')$ be the set of lottery assignments $p \in \Delta(\cA)$ such that $p_{\bA} = 0$ for all $\bA \not \in \cA'$.
Let $\sA \subseteq 2^{\cA}$ be the family of subsets of deterministic assignments $\cA' \subseteq \cA$ such that all the lottery assignments in $\Delta(\cA')$ are sd-efficient.
Then, by Lemma~\ref{lem:efficient_base}, the pair $(\cA, \sA)$ forms an independence system. Here, for each $\cA' \subseteq \cA$, we can check whether $\cA' \in \sA$ or not by verifying the sd-efficiency of the lottery assignment $u \in \Delta(\cA')$ defined by $u_{\bA}=1/|\cA'|$ if $\bA \in \cA'$ and $u_{\bA}=0$ otherwise.
Indeed, this can be done by checking that the optimal value of the following LP is zero:
\begin{align} 
    \begin{array}{rll}
         \max       &  \sum_{i\in N}\sum_{e\in E} s_{ie} &\\[2mm]\label{eq:effLP}
         \text{s.t.} & \sum_{e'\in U(\succ_i,e)}(x_{ie'}-\pi_{ie'}^{u}-s_{ie'})\ge 0    &(\forall i\in N,\, e\in E),\\ 
                     & x_{ie}=\sum_{\bA\in\cA:\,e\in A_i}p_{\bA} &(\forall i\in N,\, e\in E),\\
                     & \sum_{\bA\in\cA}p_{\bA}=1,&\\
                     & p_{\bA}\ge 0                              &(\forall \bA\in\cA),\\
                     & x_{ie},s_{ie}\ge 0                               &(\forall i\in N,\,e\in E).
    \end{array}
\end{align}
Moreover, given $\cA'\in\sA$, an sd-envy-free lottery assignment in $\Delta(\cA')$ can be obtained if it exists by solving the following linear inequalities:
\begin{align} 
\begin{array}{ll}\label{eq:efLP}
\sum_{\bA\in\cA'} p_{\bA}|A_i\cap U(\succ_i, e)|\ge \sum_{\bA\in\cA'} p_{\bA}\max_{\substack{\scriptscriptstyle Y\subseteq A_j: \scriptscriptstyle Y\in\cF_i}} |Y\cap U(\succ_i, e)|\quad
\left(\begin{tabular}{c}$\forall i,j\in N$,\\ $\forall e\in E$\end{tabular}\right)\\
\sum_{\bA\in\cA}p_{\bA}=1,\\
p_{\bA}\ge 0  \quad(\forall \bA\in\cA').
\end{array}
\end{align}

The number of possible deterministic assignments $|\cA|$ is at most $(n+1)^m$, where $n$ is the number of agents and $m$ is the number of items.
In addition, by Carath\'eodory's theorem, we only need to consider subsets of $\cA$ with at most $nm+1$ deterministic assignments.
Hence, we only need to determine existence of an sd-envy-free lottery assignment for $(n+1)^{O(nm^2)}$ candidates $\{\cA'\in \sA\mid |\cA'|\le nm+1\}$.
For each $u\in\Delta(\cA)$, the corresponding LP of \eqref{eq:effLP} can be solved in polynomial time with respect to $n$, $m$, and $|\cA|~(\le (n+1)^m)$.
For each $\cA'\in\sA$ with $|\cA'|\le nm+1$, the corresponding linear inequalities of \eqref{eq:efLP} can be solved in polynomial time with respect to $n$ and $m$.
Thus, an sd-efficient and sd-envy-free assignment can be found in an exponential time using Algorithm~\ref{alg:enumerate}.

\begin{algorithm}
\SetInd{.5em}{.5em}
\caption{Identical constraints}\label{alg:enumerate}
Construct the set of all deterministic assignments $\cA$\;
\ForEach{$\cA'\subseteq \cA$ with $|\cA'|\le nm+1$}{
    Check whether $\cA'\in \sA$ by solving \eqref{eq:effLP}\;
    \If{$\cA'\in \sA$}{
        Check the existence an sd-envy-free lottery assignment $p$ in $\Delta(\cA')$ by solving \eqref{eq:efLP}\;
        \lIf{such a lottery assignment $p$ exists }{\Return $p$}
    }
}
\Return ``No sd-efficient and sd-envy-free lottery assignment exists''\;
\end{algorithm}

\begin{theorem}
If an sd-efficient and sd-envy-free lottery assignment exists, 
Algorithm~\ref{alg:enumerate} outputs such a lottery assignment.
\end{theorem}

We have implemented this algorithm and experimented with millions of random instances with $2$ agents with identical constraints and $6$ items. 
In all cases of them, it found sd-efficient and sd-envy-free lottery assignments.
We leave a conjecture that there always exists an sd-efficient and sd-envy-free lottery assignment for two agents with identical constraints.

% \section{Concluding remarks}
% In this study, we have examined the existence of an sd-efficient and sd-envy-free lottery assignment in random assignment problems under constraints.
% The existence of such a lottery assignment is open for the case of two agents with identical constraints.
% It would be interesting to consider the case where the agents have identical matroid constraints and there are at most two types of preferences.
% A possible future work is to develop mechanisms that satisfy other fairness and efficiency concepts with existence guarantees. 

\section*{Acknowledgments}

We would like to thank the anonymous reviewers for their valuable comments. 
This work was partially supported by 
JSPS KAKENHI Grant Numbers 
JP17K12646, %HS
JP18K18004, %YY
JP20K19739, %YK
JP21K17708, %HS
and JP21H03397, %HS
by JST PRESTO Grant Numbers 
JPMJPR2122 and %YK
JPMJPR212B, %YY
JST ERATO Grant Number JPMJER2301, Japan,
and by Value Exchange Engineering, a joint research project between Mercari, Inc. %YK
%%%

% \section*{Declaration of Generative AI and AI-assisted technologies in the writing process}

% During the preparation of this work the authors used ChatGPT in order to improve language and readability. After using this tool, the authors reviewed and edited the content as needed and take full responsibility for the content of the publication.

% \bibliographystyle{abbrvnat}
% %\bibliographystyle{amsplain}
% \bibliography{ps}

\appendix

\section{Proof of Proposition~\ref{prop:sdEF}}\label{app:sdEF}
We first prove that \eqref{eq:sdEF0} implies \eqref{eq:sdEF}.
Let us fix $i,j\in N$ and $e\in E$. 
For a positive real $\epsilon$, define $u^{(\epsilon)}_i\colon E\to\mathbb{R}_{++}$ to be the utility function of $i$ such that 
\begin{align}
u^{(\epsilon)}_i(e')=\begin{cases}
1+\epsilon\cdot|E\setminus U(\succ_i,e')|&\text{if }e'\succeq e,\\
\epsilon\cdot|E\setminus U(\succ_i,e')|&\text{otherwise}\end{cases}
\end{align}
for each $e'\in E$. This utility function is consistent to $\succ_i$ if $\epsilon<1/|E|$.
By \eqref{eq:sdEF0}, we have
\begin{align}
%\MoveEqLeft
\sum_{\bA\in\cA}p_{\bA}|A_i\cap U(\succ_i, e)|
&\ge \sum_{\bA\in\cA}p_{\bA}\sum_{e'\in A_i}(u_i^{(\epsilon)}(e')-\epsilon|E|)\\
&= \sum_{\bA\in\cA}p_{\bA}(\tilde{u}_i^{(\epsilon)}(A_i)-\epsilon|E|^2)\\
&= \mathbb{E}_{\bA\sim p}[\tilde{u}_i^{(\epsilon)}(A_i)]-\epsilon|E|^2\\
&\ge \mathbb{E}_{\bA\sim p}[\tilde{u}_i^{(\epsilon)}(A_j)]-\epsilon|E|^2\\
&= \sum_{\bA\in\cA}p_{\bA}u_i^{(\epsilon)}(A_j)-\epsilon|E|^2\\
&\ge \sum_{\bA\in\cA}p_{\bA}\Bigg(\max_{\substack{Y\subseteq A_j:\\Y\in\cF_i}}|Y\cap U(\succ_i,e)|-\epsilon |E|^2\Bigg) -\epsilon|E|^2\\
&= \sum_{\bA\in\cA}p_{\bA}\max_{\substack{Y\subseteq A_j:\\Y\in\cF_i}}|Y\cap U(\succ_i,e)| - 2\epsilon |E|^2.
\end{align}
By taking the limit as $\epsilon$ goes to zero, we obtain
\begin{align}
\sum_{\bA\in\cA}p_{\bA}|A_i\cap U(\succ_i, e)|
\ge \sum_{\bA\in\cA}p_{\bA}\max_{\substack{Y\subseteq A_j\cap U(\succ_i,e)\\Y\in\cF_i}}|Y|.
\end{align}

Conversely, we now prove that \eqref{eq:sdEF} implies \eqref{eq:sdEF0}.
Let us fix $i,j\in N$ and $u_i$ consistent to $\succ_i$.
Consider a function $C\colon 2^E\to 2^E$ such that $C(X)\in\argmax_{Y\subseteq X:\, Y\in\cF_i}\tilde{u}_i(Y)$.
Then, \eqref{eq:sdEF} implies
\begin{align}
\sum_{\bA\in\cA}p_{\bA}|A_i\cap U(\succ_i, e)|
&\ge \sum_{\bA\in\cA}p_{\bA}\max_{\substack{Y\subseteq A_j\cap U(\succ_i,e)\\Y\in\cF_i}}|Y|\\
&= \sum_{\bA\in\cA}p_{\bA}|C(A_j)\cap U(\succ_i,e)|
\end{align}
for each $e\in E$.
In addition, $\tilde{u}_i(X)=\sum_{e\in C(X)}u_i(e)=\tilde{u}_i(C(X))$.
We relabel the items $E=\{e_1,\dots,e_m\}$ so that $e_1\succ_i\dots\succ_i e_m$.
For notational convenience, we denote $u_i(e_{m+1})=0$.
Now, we get
\begin{align}
%\MoveEqLeft[.5]
\mathbb{E}_{\bA\sim p}[\tilde{u}_i(A_i)]
&=\sum_{\bA\in\cA}p_{\bA}\cdot\tilde{u}_i(A_i)\\
&=\sum_{\bA\in\cA}p_{\bA}\sum_{k=1}^m(u_i(e_k)-u_i(e_{k+1}))|A_i\cap U(\succ_i,e_k)|\\
&=\sum_{k=1}^m(u_i(e_k)-u_i(e_{k+1}))\sum_{\bA\in\cA}p_{\bA}|A_i\cap U(\succ_i,e_k)|\\
&\ge \sum_{k=1}^m(u_i(e_k)-u_i(e_{k+1}))\sum_{\bA\in\cA}p_{\bA}|C(A_j)\cap U(\succ_i,e_k)|\\
&=\sum_{\bA\in\cA}p_{\bA}\sum_{k=1}^m(u_i(e_k)-u_i(e_{k+1}))|C(A_j)\cap U(\succ_i,e_k)|\\
&=\sum_{\bA\in\cA}p_{\bA}\cdot\tilde{u}_i(C(A_j))
=\sum_{\bA\in\cA}p_{\bA}\cdot\tilde{u}_i(A_j)\\
&=\mathbb{E}_{\bA\sim p}[\tilde{u}_i(A_j)].
\end{align}

\section{Compatibility of sd-weak-strategy-proofness}\label{sec:SP}
A \emph{mechanism} $\phi$ is a mapping from a preference profile to a lottery assignment.
%\begin{definition}[sd-weak-strategy-proofness]
A mechanism $\phi$ satisfies \emph{sd-weak-strategy-proofness} if 
for any preference profile $\succ$, agent $i$, and a preference $\succ_i'$, it holds that
\begin{align}
\phi(\succ'_i,\succ_{-i})_i\succeq^\sd_i \phi(\succ)_i \implies \phi(\succ'_i,\succ_{-i})_i=\phi(\succ)_i.
\end{align}
%\end{definition}
%In this section, we investigate whether sd-weak-strategy-proofness is compatible with other desirable properties.

It is known that no mechanism simultaneously satisfies sd-efficiency, sd-envy-freeness, and sd-weak-strategy-proofness, even for $2$ agents with no constraints~\cite{kojima2009}.
However, even for the most general setting\footnote{The number of agents is arbitrary, the constraints are any hereditary constraints, the constraints of the agents can be heterogeneous, and the ordinal preferences of the agents can be heterogeneous.}, it is possible to construct a mechanism by giving up either sd-efficiency or sd-envy-freeness.
In fact, the mechanism that always returns the deterministic assignment of $(\emptyset,\emptyset,\dots,\emptyset)$ is sd-envy-free and sd-weak-strategy-proof.
Moreover, a serial dictatorship mechanism that sequentially assigns the lexicographically maximum feasible set of remaining items to each agent in a fixed agents' order is sd-efficient and sd-weak-strategy-proof.

\section{Other desirable properties}\label{app:others}
\subsection{Existence of anonymous mechanism}
In this section, we consider \emph{anonymity} as a fairness concept.
A mechanism is called \emph{anonymous} if permuting agents' names does not affect the resulting outcome.
Note that anonymity implies \emph{equal treatment of equals}, namely, two agents with identical preferences and constraints get the same fractional assignment.

In general, the set of feasible fractional assignments of $(N,E,(\succ_i)_{i\in N})$ can be written as follows:
\begin{align}
    P = \left\{\pi^p\in \mathbb{R}^{N\times E} \mid p\in\Delta(\cA)\right\}.
\end{align}
Note that $P$ is a nonempty polytope.
Let us consider a mechanism that outputs the optimal solution of the following convex programming:
\begin{align}
    \begin{array}{rl}
         \min       &  \sum_{i\in N}\sum_{e\in E}\left(\sum_{e'\in E:\,e'\succeq_i e}(1-x_{ie'})\right)^2\\
         \text{s.t.}&  x\in P.
    \end{array}
\end{align}
This convex programming has a unique optimal solution because the objective function is strictly convex, and the feasibility region is nonempty, closed, bounded, and convex~\cite{RockWets98}.
Hence, the mechanism is anonymous because the convex programming does not depend on the ordering of agents.

Let $x^*$ be the optimal solution to the problem.
We show that $x^*$ is sd-efficient by contradiction.
Suppose that $x^*$ is not sd-efficient, that is, there exists a feasible fractional assignment $y\in P\setminus\{x^*\}$ such that $y_i\succeq_i^\sd x^*_i$ for all $i\in N$. 
Then the objective value of $x^*$ (i.e., $\sum_{i\in N}\sum_{e\in E}(\sum_{e'\in E:\,e'\succeq_i e}(1-x^*_{ie'}))^2$) is larger than the objective value of $y$ (i.e., $\sum_{i\in N}\sum_{e\in E}(\sum_{e'\in E:\,e'\succeq_i e}(1-y_{ie'}))^2$).
This contradicts the optimality of $x^*$.

\begin{theorem}\label{thm:anonymous}
There exists an anonymous mechanism that always produces an sd-efficient lottery assignment.
\end{theorem}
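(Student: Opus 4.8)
The plan is to adopt the mechanism defined by the convex program displayed above as the witness, and to verify in turn that it is well-defined, anonymous, and sd-efficient.

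First I would pin down existence and uniqueness of the output. The feasible region $P$ is a nonempty, compact, convex polytope, so a minimizer exists. For uniqueness I would write $c_{ie}=|U(\succ_i,e)|$ and $S_{ie}(x)=\sum_{e'\in U(\succ_i,e)}x_{ie'}$, so that the objective becomes $\Phi(x)=\sum_{i\in N}\sum_{e\in E}\bigl(c_{ie}-S_{ie}(x)\bigr)^2$, i.e.\ the squared Euclidean norm of the affine map $x\mapsto c-S(x)$. For each fixed $i$ the map $x_i\mapsto(S_{ie}(x))_{e\in E}$ is the partial-sum map along $\succ_i$, which is invertible (lower triangular with unit diagonal), so $S$ is injective and $\Phi$ is strictly convex. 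Existence together with strict convexity over a compact convex set yields a unique minimizer $x^*$ \cite{RockWets98}.

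Next I would argue anonymity. Relabeling the agents by a permutation $\sigma$ permutes the rows of every point of $P$ and simultaneously permutes the summands of $\Phi$; hence the relabeled instance has feasible region $\sigma(P)$ and objective $\Phi\circ\sigma^{-1}$, whose unique minimizer is $\sigma(x^*)$. Because the minimizer is unique, the output of the mechanism commutes with relabeling, which is precisely anonymity.

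The key step is sd-efficiency, and this is where the particular shape of the objective does the work. On $P$ every coordinate lies in $[0,1]$, so $0\le S_{ie}(x)\le c_{ie}$, and therefore each summand $(c_{ie}-S_{ie}(x))^2$ is a decreasing function of $S_{ie}(x)$ on the range attained. Suppose $x^*$ were not sd-efficient: then there is $y\in P$ with $y\ne x^*$ and $y_i\succeq_i^\sd x^*_i$ for every $i$. Since $y\ne x^*$ at least one row differs, so for some agent $i^*$ the strict relation $y_{i^*}\succ_{i^*}^\sd x^*_{i^*}$ holds; unpacking the sd-order, $S_{ie}(y)\ge S_{ie}(x^*)$ for all $i,e$, with strict inequality for some pair $(i^*,e^*)$. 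Monotonicity of each summand gives $\Phi(y)\le\Phi(x^*)$, and at $(i^*,e^*)$—where $c_{i^*e^*}-S_{i^*e^*}(x^*)>0$—the inequality is strict, so $\Phi(y)<\Phi(x^*)$, contradicting optimality of $x^*$. I expect the only genuinely delicate points to be the two facts that make the argument run: that $S_{ie}(x)\le c_{ie}$ throughout $P$, which renders the squared terms monotone decreasing, and that the partial-sum maps are invertible, which makes $\Phi$ strictly convex and hence the mechanism well-defined; everything else is bookkeeping.
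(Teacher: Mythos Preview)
Your proposal is correct and follows essentially the same approach as the paper: the same convex program, uniqueness via strict convexity, anonymity from symmetry of the objective, and sd-efficiency by contradiction through a strict decrease in the objective. You supply more detail than the paper does---notably the invertibility of the partial-sum map to justify strict convexity, and the observation $S_{ie}(x)\le c_{ie}$ on $P$ to make each squared term monotone---which the paper leaves implicit.
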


\subsection{Nonexistence of necessarily Pareto-efficient assignment}
% \begin{proposition}\label{prop:exist-necessaryPO}
% If $(E,\cF_i)$ is a matroid for all agents $i$, there exists a lottery assignment that is Pareto-efficient for any cardinal utilities consistent to $(\succ_i)_{i\in N}$.
% \end{proposition}
% \begin{proof}
% %Such a deterministic assignment can be found by the serial dictatorship algorithm because each agent can select the unique optimal set by the matroid greedy algorithm regardless of cardinal utilities.
% The serial dictatorship algorithm can find such a deterministic assignment because each agent can use the matroid greedy algorithm to select a unique optimal set regardless of cardinal utilities.
% \end{proof}

\begin{proposition}\label{prop:no-necessaryPO}
A lottery assignment that satisfies necessarily Pareto-efficient may not exist.
\end{proposition}
\begin{proof}
Consider the following instance with $N=\{1\}$ and $E=\{a,b,c\}$:
\begin{itemize}
    \item $\cF_1=\{\emptyset,\{a\},\{b\},\{c\},\{b,c\}\}$, and
    \item $a\succ_1 b\succ_1 c$.
\end{itemize}
If the cardinal utilities are $u_1(a)=4$, $u_1(b)=2$, and $u_1(c)=1$, Pareto-efficient lottery assignment is uniquely determined as the deterministic assignment $(\{a\})$.
If the cardinal utilities are $u_1(a)=4, u_1(b)=3, u_1(c)=2$, Pareto-efficient lottery assignment is uniquely determined as the deterministic assignment $(\{b,c\})$.
Thus, no lottery assignment is Pareto-efficient for any cardinal utilities consistent with $\succ_1$.
\end{proof}

\subsection{Nonexistence of sd-proportional assignment}
\begin{proposition}\label{prop:no-prop}
A lottery assignment that satisfies sd-proportionality may not exist.
\end{proposition}
\begin{proof}
Consider the following instance with $N=\{1,2\}$ and $E=\{a,b,c\}$:
\begin{itemize}
    \item $\cF_1=\{\emptyset,\{a\},\{b\},\{c\},\{a,b\}\}$,
    \item $\cF_2=\{\emptyset,\{a\},\{b\},\{c\},\{b,c\}\}$,
    \item $b\succ_1 c\succ_1 a$, and
    \item $b\succ_2 a\succ_2 c$.
\end{itemize}
Suppose to the contrary that there exists an sd-proportional lottery assignment $p$.
Since $(\frac{1}{2}\cdot\bm{1},0)$ and $(0,\frac{1}{2}\cdot\bm{1})$ are in $P$, all items must be allocated to agents in each deterministic assignment $\bA\in\supp(p)$.
The deterministic assignments that distribute all the items are only $(\{a,b\},\{c\})$ and $(\{a\},\{b,c\})$.
Since both agents must receive at least half of the item $b$, we have $p_{(\{a,b\},\{c\})}=p_{(\{a\},\{b,c\})}=1/2$.
However, this is not sd-proportional.
\end{proof}

\end{document}